\documentclass[11pt]{article}
\usepackage[margin=1in]{geometry}
\usepackage[final]{microtype}
\usepackage{epsfig}
\usepackage{graphics}
\usepackage{latexsym}
\usepackage{amsmath}
\usepackage{amsfonts}
\usepackage{amssymb}
\usepackage{mathrsfs}
\usepackage{xspace}
\usepackage{epstopdf}
\usepackage{float}
\usepackage{hyperref}
\usepackage{pgfplots}
\usepackage{thmtools}
\usepackage{thm-restate}
\usepackage{longtable}

\numberwithin{equation}{section}
\usepackage[ruled,vlined]{algorithm2e}
\SetArgSty{textrm}

\usepackage[english]{babel}
\usepackage[nottoc]{tocbibind}

\usepackage{caption}
\usepackage{subcaption}

\usepackage[usenames,dvipsnames]{xcolor}
\usepackage{pifont}
\usepackage{booktabs}
\usepackage{multirow}

\newcommand{\bluecomment}[1]{\textcolor{blue}{\textrm{#1}}}

\usepackage{amsthm}     
\theoremstyle{plain}     
\newtheorem{theorem}{Theorem}

\newtheorem{lemma}{Lemma}
\newtheorem{proposition}{Proposition}
\newtheorem{mechanism}{Mechanism}
\theoremstyle{definition} 

\newtheorem{example}{Example}

\theoremstyle{remark} 

\makeatletter
\@addtoreset{equation}{section}
\def\section{\@startsection {section}{1}{\z@}{-3.5ex plus -1ex minus
 -.2ex}{2.3ex plus .2ex}{\large\bf}}
\makeatother

\def\bfm#1{\mbox{\boldmath$#1$}}

\def\0{\bfm 0}

\DeclareMathAlphabet{\mathpzc}{OT1}{pzc}{m}{it}

\newcounter{my}

\newcounter{my2}

\newcounter{my3}

\newcounter{my4}

\newcounter{my5}

\newcounter{my6}

\allowdisplaybreaks 

\begin{document}

\title{A Randomized $\frac32$-Approximation for Strategic Facility Location\\ on a Circle}

\author{Hau Chan$^{1}$\quad Jianan Lin$^{2}$\quad Chenhao Wang $^{3,4}$\\[0.75em]
$1$ University of Nebraska-Lincoln\\
$2$ Rensselaer Polytechnic Institute\\
$3$ Beijing Normal University-Zhuhai\\
$4$ Beijing Normal-Hong Kong Baptist University
}
\date{}
\maketitle
\begingroup
\renewcommand{\thefootnote}{}
\footnotetext{The main body of this work was completed in June 2026, with assistance from GPT-5.4 in overcoming some of the difficulties in the approximation-ratio proofs. The underlying RD--PCD mixture comes from Rogowski and Dziubi{\'n}ski's IJCAI 2025 paper~\cite{rogowski2025improved}. After studying their analysis, we recognized that exploiting the discrepancy introduced by replacing circle distances with line distances, which their analysis leaves unexploited, was the key to a sharper guarantee. The paper was submitted to WINE 2026 and rejected with review scores of 3, 4, and 5. Version~1 was uploaded to arXiv on the day of the rejection; the main results therefore date from June 2026 rather than from this revision. Version~2 revisits the lower-bound argument, with assistance from GPT Astra-6. The original proof relied on a result from the literature that we have not been able to fully verify. We therefore replace it with an independently verifiable, computer-assisted lower bound that does not rely on that result.}
\endgroup


\begin{abstract}
We study strategyproof mechanisms for locating a single facility on a circle so as to serve a set of strategic agents under the utilitarian social cost objective. 
We analyze a simple parity-dependent mechanism that randomizes between the Random Dictator (RD) mechanism and the Proportional Circle Distance (PCD) mechanism. 
For an odd number \(n\) of agents, this mechanism mixes RD and PCD with equal probability, as originally proposed by Rogowski and Dziubi{\'n}ski (IJCAI 2025). 
For even \(n\), it mixes RD with a random-deletion extension of PCD, in which one agent is removed uniformly at random before PCD is applied to the remaining agents.
Our main result shows that this mechanism is strategyproof and achieves an approximation ratio of \(\frac32\) for every \(n\ge 3\). 
This improves the \(\frac74\) upper bound of Rogowski and Dziubi{\'n}ski, which applied only to odd \(n\), and extends the guarantee to even numbers of agents. 
Finally, we establish a computer-assisted lower bound of \(1.088187\ldots\) on the approximation ratio of any randomized strategyproof mechanism, improving on the previous lower bound of \(1.0456\) due to Meir (SAGT 2019). 
\end{abstract}

\section{Introduction}


The problem of designing approximately optimal strategyproof mechanisms for locating a single facility in a metric space has received significant theoretical interest over the past few decades \cite{alon2010strategyproof,gravin2025approximation,moulin1980strategy,procaccia2013approximate,schummer2002strategy}. 
In this problem, a mechanism elicits agents' preferred facility locations and selects a facility location that approximately optimizes a given objective, while ensuring that agents have incentives to report their preferences truthfully. 
Beyond its theoretical interest, the problem is motivated by applications such as choosing locations for public facilities, including schools, parks, and libraries, to serve residents in a region \cite{drezner2004facility}; selecting representatives that reflect the political views of groups \cite{black1948rationale,moulin1980strategy}; and choosing common time points in cyclic domains to coordinate recurring activities \cite{peters2020preferences} while incorporating the preferences of the respective agents.


The study of approximate mechanism design without money for facility location began with the setting of locating a single facility on a line under the utilitarian social cost objective, defined as the sum of the distances between the agents' preferred locations and the facility location \cite{procaccia2013approximate}. 
In this setting, Procaccia and Tennenholtz \cite{procaccia2013approximate} showed that the median mechanism, which locates the facility at the median reported location, is strategyproof and optimal.
Building on this work, subsequent studies \cite{alon2010strategyproof,dokow2012mechanism} considered single-facility location under the utilitarian social cost objective on more general graphs, including trees and circles.
For trees, Alon et al.~\cite{alon2010strategyproof} showed that the median mechanism admits a natural tree-median generalization that remains strategyproof and exactly minimizes the social cost. 
For circles, however, every deterministic strategyproof and onto mechanism is dictatorial \cite{schummer2002strategy}, and can have approximation ratio \(n-1\) with \(n\) agents \cite{alon2010strategyproof}.

For facility location on a circle, subsequent studies \cite{alon2010strategyproof,DBLP:conf/sagt/Meir19,rogowski2025improved} have examined randomized strategyproof mechanisms in order to improve over deterministic mechanisms. 
Alon et al.~\cite{alon2010strategyproof} showed that the simple \emph{Random Dictator} (RD) mechanism, which returns each reported agent location with probability \(\frac1n\), achieves an approximation ratio of \(2-\frac{2}{n}\) for every \(n\ge 2\). 
Meir~\cite{DBLP:conf/sagt/Meir19} introduced two randomized mechanisms that beat RD on the circle when \(n=3\): the \emph{Proportional Circle Distance} (PCD) mechanism, which selects each reported location \(x_i\) with probability proportional to the length \(L_i\) of the arc facing agent \(i\); and the \emph{\(q\)-Quadratic Circle Distance} (\(q\)-QCD) mechanism, in which the probability of selecting \(x_i\) is proportional to \(\max\{L_i^2,q^2\}\). 
For $n=3$, Meir proved that PCD and \(\frac14\)-QCD are strategyproof and achieve approximation ratios \(\frac54\) and \(\frac76\), respectively. 
Rogowski and Dziubi{\'n}ski \cite{rogowski2025improved} proposed a mechanism that mixes RD and PCD with equal probability and derived an improved approximation ratio of \(\frac74\) for odd numbers \(n\ge 3\) of agents. They also provided computational evidence conjecturing that the exact approximation ratio of this mixture is \(\frac32\).
Farjoun and Meir~\cite{farjoun2025strategyproof} subsequently proved that PCD has a tight approximation ratio of $7-4\sqrt2\approx1.34315$ for $n=5$. Their analysis gives a sharper guarantee for this fixed population size, complementary to our uniform guarantee for arbitrary odd and even $n$.
On the inapproximability side, Meir~\cite{DBLP:conf/sagt/Meir19} proved a lower bound of \(1.0456\) on the approximation ratio of any randomized strategyproof mechanism.

\subsection{Our Results}\label{subsec:contributions}

Motivated by the remaining gap between the best known upper and lower bounds, and by the absence of a unified guarantee for arbitrary numbers of agents, we revisit the design and analysis of randomized strategyproof mechanisms on the circle. 
We show that a simple mixed mechanism achieves an approximation ratio of \(\frac32\) for every \(n\ge 3\). 
This improves the previous \(\frac74\) upper bound, which applied only to odd \(n\), and extends the guarantee to even numbers of agents.
We also establish an improved computer-assisted lower bound of \(1.088187\ldots\) on the approximation ratio of any randomized strategyproof mechanism on the circle. 
Thus, our results narrow the gap between the best known upper and lower bounds for randomized strategic facility location on the circle from both directions. Figure~\ref{fig:upper-bound-comparison} summarizes the bounds.

\begin{figure}[!htb]
    \centering
    \begin{tikzpicture}
    \begin{axis}[
        width=0.92\linewidth,
        height=0.48\linewidth,
        xmin=3, xmax=20,
        ymin=1.0, ymax=2.0,
        xlabel={Number \(n\) of agents},
        ylabel={Approximation ratio},
        xtick={3,4,...,20},
        ytick={1.0,1.1,1.25,1.5,1.75,2.0},
        tick label style={font=\scriptsize},
        label style={font=\small},
        legend style={
            font=\small,
            at={(0.5,-0.26)},
            anchor=north,
            legend columns=2,
            cells={anchor=west},
            fill=white,
            draw=gray!50,
            inner sep=2pt,
            /tikz/every even column/.append style={column sep=0.5em}
        },
        legend image post style={scale=1},
        grid=both,
        grid style={line width=.1pt, draw=gray!25},
        major grid style={line width=.2pt, draw=gray!35},
    ]

    \addplot+[blue, densely dashed, mark=triangle*, mark size=1.4pt]
    coordinates {
        (3,1.3333) (4,1.5000) (5,1.6000)
        (6,1.6667) (7,1.7143) (8,1.7500) (9,1.7778)
        (10,1.8000) (11,1.8182) (12,1.8333) (13,1.8462)
        (14,1.8571) (15,1.8667) (16,1.8750) (17,1.8824)
        (18,1.8889) (19,1.8947) (20,1.9000)
    };
    \addlegendentry{RD~\cite{alon2010strategyproof}}

    \addplot+[cyan!65!blue, only marks, mark=square*, mark size=1.8pt]
    coordinates {
     (3,1.75)   (5,1.75) (7,1.75) (9,1.75) (11,1.75)
        (13,1.75) (15,1.75) (17,1.75) (19,1.75)
    };
    \addlegendentry{RD/PCD, odd \(n\)~\cite{rogowski2025improved}}

    \addplot+[green!50!black, only marks, mark=diamond*, mark size=2.0pt]
    coordinates {(3,1.25)};
    \addlegendentry{PCD, \(n=3\)~\cite{DBLP:conf/sagt/Meir19}}

    \addplot+[green!50!black, only marks, mark=pentagon*, mark size=2.0pt]
    coordinates {(3,1.1667)};
    \addlegendentry{\(q\)-QCD, \(n=3\)~\cite{DBLP:conf/sagt/Meir19}}

    \addplot+[violet, only marks, mark=diamond*, mark size=2.2pt]
    coordinates {(5,{7-4*sqrt(2)})};
    \addlegendentry{PCD, \(n=5\)~\cite{farjoun2025strategyproof}}

    \addplot+[gray, only marks, mark=o, mark size=2.2pt]
    coordinates {(3,1.0456)};
    \addlegendentry{LB~\cite{DBLP:conf/sagt/Meir19}}

    \addplot+[red, very thick, mark=*, mark size=1.6pt]
    coordinates {
        (3,1.5) (4,1.5) (5,1.5) (6,1.5) (7,1.5)
        (8,1.5) (9,1.5) (10,1.5) (11,1.5) (12,1.5)
        (13,1.5) (14,1.5) (15,1.5) (16,1.5) (17,1.5)
        (18,1.5) (19,1.5) (20,1.5)
    };
    \addlegendentry{MRP, this paper}

    \addplot+[red, only marks, mark=star, mark size=2.6pt]
    coordinates {(4,1.088187342345)};
    \addlegendentry{LB, this paper}

    \end{axis}
    \end{tikzpicture}
    \caption{Comparison of upper-bound guarantees and lower-bound witnesses for randomized strategic facility location on the circle. The red horizontal line is the uniform $\frac32$ upper bound for MRP, not its exact approximation ratio for each fixed $n$. The lower-bound markers indicate witness population sizes, rather than bounds established separately for every $n$.}
    \label{fig:upper-bound-comparison}
\end{figure}

Our mechanism depends on the parity of the number $n$ of agents; we call it the \emph{Mixed RD\&PCD} (MRP) mechanism. For odd \(n\), the MRP mechanism randomly selects 
RD and PCD with equal probability; this is the same mechanism studied by Rogowski and Dziubi{\'n}ski~\cite{rogowski2025improved}. {For even \(n\), we first introduce the \emph{Even Proportional Circle Distance} (EPCD) mechanism, a random-deletion extension of PCD: it first deletes one agent uniformly at random and then applies PCD to the remaining odd-agent profile. The MRP mechanism then mixes 
RD and EPCD with equal probability. }


Our main contribution is to demonstrate that the MRP mechanism achieves an approximation ratio of \(\frac32\) for every \(n\ge 3\), while the cases \(n=1,2\) are trivial. This improves the previous \(\frac74\) guarantee for odd \(n\) due to Rogowski and Dziubi{\'n}ski~\cite{rogowski2025improved}, thereby resolving their \(\frac32\)-approximation conjecture for this mechanism. It also extends the same guarantee to even numbers of agents, therefore yielding a unified guarantee based on a common framework for all nontrivial values of \(n\).

Establishing this guarantee is technically challenging for two reasons. 
First, unlike the line, the circle has no canonical left-to-right order or boundary point. 
Any analysis that cuts the circle into a line necessarily introduces an artificial discontinuity, and distances between agents on opposite sides of the cut may be overestimated. 
Thus, a line-based argument cannot be applied directly without carefully accounting for the error caused by the cut. 
Second, the PCD mechanism is highly sensitive to the circular ordering of the agents: its selection probabilities depend on the lengths of opposite arcs, which change nonlinearly with the agent configuration. 
This makes it difficult to compare the expected cost of PCD with the optimum by a purely local or pairwise argument. 
The even case introduces an additional difficulty, since PCD is naturally defined for odd profiles; deleting a random agent changes both the circular order and the arc lengths, so the analysis must control the expected effect of this deletion.

The main technical contribution is to sharpen the cut-based analysis of Rogowski and Dziubi{\'n}ski~\cite{rogowski2025improved}. Assume the circle is of length 1. 
Their analysis first uses anonymity and neutrality to restrict attention to normalized profiles in which an optimal location is fixed, and then cuts the unit circle at a carefully chosen point. 
This replaces the circular metric by the line-segment distance \(d'(u,v)=|u-v|\), thereby avoiding the nonlinearity caused by the minimum in the circular distance \(d(u,v)=\min\{|u-v|,1-|u-v|\}\). 
The resulting line metric is easier to analyze, but it is also conservative: it can strictly overestimate the true circular distance, precisely for pairs of points whose shortest path crosses the cut. 
In their  \(\frac74\)-approximation analysis, this loss is not exploited.
Our analysis keeps track of the discrepancy between the cut metric and the original circular metric. 
The key observation is that the overestimation introduced by the cut can be charged against the distance savings provided by the PCD component of the mechanism. 
This requires a more delicate accounting of which pairs of agents are separated by the cut and how the PCD probabilities weight the corresponding arcs. 
By refining this distance calculation, we improve the odd-agent analysis from \(\frac74\) to \(\frac32\). 
We then extend the argument to even numbers of agents through the random-deletion EPCD mechanism, controlling in expectation how the deletion step changes the relevant arcs and crossing terms. 
Consequently, the MRP mechanism achieves an approximation ratio of \(\frac32\) for every \(n\ge3\).


 Finally, we complement our upper-bound result with two lower-bound results. 
First, we show that the \(\frac32\) guarantee is asymptotically tight for the MRP mechanism, for both odd and even numbers of agents. 
These tight examples also explain why the mechanism mixes RD with PCD, or with EPCD in the even case, with equal probability. 
Second, we establish a computer-assisted lower bound of \(1.088187\ldots\) for general randomized strategyproof mechanisms by constructing four-agent instances. 
This improves the previously best known lower bound of \(1.0456\) due to Meir~\cite{DBLP:conf/sagt/Meir19}. 
Therefore, the current gap for randomized strategyproof facility location on the circle is between \(1.088187\ldots\) and \(\frac32\).


\subsection{Other Related Work}

When designing strategyproof mechanisms for locating a single facility in a metric space, existing related  studies have mostly focused on characterizing strategyproof mechanisms independent of the cost objective or identifying strategyproof mechanisms that approximately optimize a given objective. 
Below, we review studies that examine these two directions in metric spaces closely related to the  circle setting considered here. 
We refer readers to the survey of \cite{chan2021mechanismsurvey} for other mechanism design settings and variants of facility location. 

\paragraph{Characterizations.}

Characterizations of strategyproof mechanisms for facility location have been studied in several metric spaces. 
Moulin~\cite{moulin1980strategy} characterized strategyproof mechanisms for single-peaked preferences on the line by generalized median rules with phantom voters, and Border and Jordan~\cite{border1983} studied related phantom-voter characterizations in multidimensional settings. 
For network domains, Schummer and Vohra~\cite{schummer2002strategy} characterized strategyproof mechanisms and showed that circles impose strong restrictions on deterministic mechanisms. 
Dokow et al.~\cite{dokow2012mechanism} further studied mechanism design on discrete lines and  sufficiently large cycles. 
Miyagawa~\cite{DBLP:journals/scw/Miyagawa01} and Fotakis and Tzamos~\cite{fotakis2013winner} studied characterizations for multiple-facility mechanisms on the line, while Aziz et al.~\cite{aziz2020capacity} characterized strategyproof mechanisms with capacity constraints. 
More recently, Tang et al.~\cite{tang2020characterization} characterized group-strategyproof mechanisms in strictly convex spaces, and Lin~\cite{lin2020nearly} gave a characterization for two-agent 
deterministic strategyproof mechanisms in \(L_p\) spaces. 

\paragraph{Utilitarian social cost minimization.}

Procaccia and Tennenholtz~\cite{procaccia2013approximate} introduced the framework of approximate mechanism design without money, using facility location as a central example. On the line, the median mechanism is strategyproof and minimizes the sum of agents' distances to the facility. 
Lu et al.~\cite{lu2009tighter,lu10mechanism} studied deterministic and randomized mechanisms for locating two facilities on the line.
The utilitarian objective has also been studied in multidimensional spaces. Meir~\cite{DBLP:conf/sagt/Meir19} showed that the coordinate-wise median (CM) mechanism achieves a $\sqrt d$-approximation in $d$-dimensional Euclidean space. Goel and Hann-Caruthers~\cite{GoelH23} established its optimality among deterministic anonymous strategyproof mechanisms in the Euclidean plane, and Gravin and Jia~\cite{gravin2025approximation} obtained dimension-independent guarantees for CM in $L_q$ spaces. More recently, Barak~\cite{barak2026facility} and, independently, Chan et al.~\cite{chan2026strategyproof} established a randomized $\frac4\pi$-approximation in the Euclidean plane by applying CM after a uniformly random rotation. Barak also analyzed this mechanism in higher-dimensional Euclidean spaces. Beyond the utilitarian objective, Chan et al.~\cite{chan2026strategyproof} and Hastings~\cite{hastings2026strategic} studied approximation guarantees under more general $L_p$-norm social costs.

For arbitrary numbers of facilities on the line, Fotakis and Tzamos~\cite{fotakis2013strategyproof} introduced the randomized Equal Cost mechanism, which is group strategyproof for concave connection costs and achieves an $n$-approximation for social cost. This includes the standard distance cost as a special case. Several recent concurrent works have improved randomized guarantees for two or three facilities and for other specific choices of $k$. Ma and Peng~\cite{ma2026breaking}, Jia~\cite{jia2026product}, and Chan et al.~\cite{chan2026randomized} obtained approximation ratios below $4$ for two facilities on the line, with the results of Ma and Peng and Chan et al. extending to Ptolemaic metric spaces, including Euclidean spaces. For three facilities on the line, Aziz et al.~\cite{aziz2026anchoring} obtained an $8$-approximation, while Jia~\cite{jia2026product} and Chan et al.~\cite{chan2026randomized} obtained a $6$-approximation. Chan et al. also improved the guarantee for $k=n-1$ facilities in arbitrary metric spaces. Learning-augmented variants have been studied for both single-facility and two-facility settings~\cite{agrawal2022learning,Xu2022}.

\section{Preliminaries}\label{sec:model}

We study the facility location problem on a circle of length \(1\). Let
\(N=\{1,2,\ldots,n\}\) be the set of agents. We denote the circle by \(G\), and
identify it with the interval \([-\frac12,\frac12]\) after gluing the two
endpoints. The distance between two points \(x,y\in G\) is the length of the
shorter arc between them:
\[
    d(x,y)=\min\{|x-y|,1-|x-y|\}.
\]
A location profile is denoted by \(\mathbf{x}=(x_1,\ldots,x_n)\in G^n\), where
\(x_i\in G\) is the preferred location of agent \(i\). For any integer $k\ge 1$, write $[k]=\{1,\ldots,k\}$.

A randomized mechanism is a function
\(f:G^n\to\Delta(G)\), where \(\Delta(G)\) denotes the set of probability
distributions over \(G\). For a randomized mechanism $f$, we write
\(Y\sim f(\mathbf{x})\) for the random facility location sampled from the
distribution returned on profile \(\mathbf{x}\). Given a facility location
\(y\in G\), the cost of agent \(i\) is \(d(x_i,y)\). For a random facility
location \(Y\), the cost of agent \(i\) is the expectation
\(\mathbb E_Y[d(x_i,Y)]\). Strategyproofness requires a mechanism to ensure that no agent can gain by misreporting. Formally, a mechanism \(f\) is
\emph{strategyproof} if, for every profile \(\mathbf{x}\in G^n\), every agent
\(i\in N\), and every alternative report \(x_i'\in G\),
\[
    \mathbb E_{Y\sim f(\mathbf{x})}\big[d(x_i,Y)\big]
    \le
    \mathbb E_{Y'\sim f(x_i',\mathbf{x}_{-i})}\big[d(x_i,Y')\big],
\]
where \((x_i',\mathbf{x}_{-i})\) is the
profile obtained from \(\mathbf{x}\) by replacing only agent \(i\)'s report with
\(x_i'\).

We use the following standard symmetry properties. A mechanism $f$ is said to be
\begin{itemize}
    \item \emph{Anonymous}, if its output distribution is invariant under permutations of the agents. 
\item \emph{Neutral}, if  relabeling the graph's locations in a symmetry-preserving way simply relabels the outcome in the same way. Formally, for any automorphism \(\gamma\) of $G$,  $f(\gamma \circ \mathbf x) =\gamma\circ f(\mathbf x)$. 
\item \emph{Peaks-only}, if
the support of \(f(\mathbf{x})\) is contained in the set of reported locations
\(\{x_1,\ldots,x_n\}\).
\end{itemize}

We evaluate a facility location by utilitarian social cost. For a point
\(y\in G\), let \(\mathrm{SC}(\mathbf{x},y)=\sum_{i\in N}d(x_i,y)\) be the total distance from all agent locations to $y$. For a
 mechanism \(f\), the social cost  achieved on profile \(\mathbf{x}\)
is the expectation
\[
    \mathrm{SC}(\mathbf{x},f(\mathbf{x}))
    =
    \mathbb E_{Y\sim f(\mathbf{x})}
    \left[
        \sum_{i\in N}d(x_i,Y)
    \right].
\]
The optimal social cost on profile \(\mathbf{x}\) is
\(
    \mathrm{OPT}(\mathbf{x})
    =
    \min_{y\in G}\mathrm{SC}(\mathbf{x},y),
\)
where the minimum is always attained since \(G\) is compact and the social cost function is
continuous. The \emph{approximation ratio} of a mechanism \(f\) is
\[
    \rho(f)
    =
    \sup_{\mathbf{x}}
    \frac{\mathrm{SC}(\mathbf{x},f(\mathbf{x}))}{\mathrm{OPT}(\mathbf{x})},
\]
where the profile $\mathbf x$ is taken over all $n$-agent instances with $\mathrm{OPT}(\mathbf{x})>0$, for all $n$.
For profiles with \(\mathrm{OPT}(\mathbf{x})=0\), all agents report the same
point. All mechanisms considered in this paper choose that point with
probability one, so these profiles do not affect the approximation ratio. 

\paragraph{Randomized mechanisms.}
We next present the mechanisms used in the paper. The \emph{Random Dictator}
mechanism chooses one agent uniformly at random and locates
the facility at the report of that agent.  It
was first introduced  by
Alon et al.~\cite{alon2010strategyproof} for facility location on networks, and is known to be strategyproof, anonymous, neutral, and peaks-only.

\begin{mechanism}[Random Dictator ($\mathrm{RD}$)]
  Given location profile $\mathbf x$, select each agent $i\in N$ with probability $\frac1n$ and return the reported location \(x_i\).
\end{mechanism}

The \emph{Proportional Circle Distance} mechanism was introduced by Meir
\cite{DBLP:conf/sagt/Meir19} for an odd number of agents, and it is known to be
strategyproof for any odd $n$, anonymous, neutral, and peaks-only.

\begin{mechanism}[Proportional Circle Distance (\(\mathrm{PCD}\))]
    Let \(n=2k+1\). Fix a cyclic ordering
\(z_1,\ldots,z_{2k+1}\) of the reported copies in clockwise order, with
ties broken consistently. Let \(g_t\) be the clockwise gap from \(z_t\) to
\(z_{t+1}\), where indices are taken modulo \(2k+1\). Thus
\(\sum_{t=1}^{2k+1}g_t=1\). The \(\mathrm{PCD}\) mechanism assigns to each ordered copy
\(z_t\) probability equal to the length of its opposing gap:
\[
    \Pr_{\mathrm{PCD}(\mathbf{x})}[z_t]
    =
    g_{t+k},
    \qquad
    t\in[2k+1].
\]
If several ordered copies correspond to the same location, their
probabilities  are aggregated. 
\end{mechanism}

Figure~\ref{fig:pcd-example} illustrates the opposing gaps and selection probabilities for three agents.

\begin{figure}[htbp]
    \centering
    \begin{minipage}[c]{0.48\linewidth}
    \centering
    \begin{tikzpicture}[scale=1.25]
        \draw[gray!50] (0,0) circle (1.3);
        \draw[blue, very thick, ->] (90:1.3) arc (90:0:1.3);
        \draw[red, very thick, ->] (0:1.3) arc (0:-120:1.3);
        \draw[green!50!black, very thick, ->] (-120:1.3) arc (-120:-270:1.3);
        \node[blue] at (45:1.63) {$g_1=\frac14$};
        \node[red] at (-60:1.68) {$g_2=\frac13$};
        \node[green!50!black] at (165:1.73) {$g_3=\frac5{12}$};
        \fill (90:1.3) circle (1.5pt);
        \fill (0:1.3) circle (1.5pt);
        \fill (-120:1.3) circle (1.5pt);
        \node[above] at (90:1.3) {$z_1=0$};
        \node[right] at (0:1.3) {$z_2=\frac14$};
        \node[below left] at (-120:1.3) {$z_3=-\frac5{12}$};
    \end{tikzpicture}
    \end{minipage}\hfill
    \begin{minipage}[c]{0.48\linewidth}
    \centering
    \begin{tabular}{ccc}
        \toprule
        Report & Opposing gap & Probability \\
        \midrule
        $z_1$ & $g_2$ & $\frac13$ \\[3pt]
        $z_2$ & $g_3$ & $\frac5{12}$ \\[3pt]
        $z_3$ & $g_1$ & $\frac14$ \\
        \bottomrule
    \end{tabular}
    \end{minipage}
    \caption{PCD on a circle of circumference $1$ with three agents. Arrows follow the clockwise order, and each gap is labeled by its arc length. Each report is selected with probability equal to the gap between the other two reports that does not contain it.}
    \label{fig:pcd-example}
\end{figure}

For an even number of agents, we define the following random-deletion extension of
\(\mathrm{PCD}\).

\begin{mechanism}[Even Proportional Circle Distance (\(\mathrm{EPCD}\))]\label{mec:epcd}
    Let \(n=2k\ge 4\). First choose one agent uniformly at random and
delete that agent. Then run \(\mathrm{PCD}\) on the remaining \(2k-1\) agents.
\end{mechanism}
\(\mathrm{EPCD}\) has an equivalent formulation: if \(z_1,\ldots,z_{2k}\) are the ordered reported
copies in clockwise order and \(g_t\) is the clockwise gap from \(z_t\) to
\(z_{t+1}\), then
\begin{equation}\label{eq:epqq}
    \Pr_{\mathrm{EPCD}(\mathbf{x})}[z_t]
    =
    \frac12\left(g_{t+k-1}+g_{t+k}\right),
    \qquad
    t\in[2k],
\end{equation}
where indices are taken modulo \(2k\). This is because for each surviving copy $z_t$, the relevant opposing gap is $g_{t+k-1}$
 for half of the possible deletions and $g_{t+k}$ for the other half, yielding the average. As above, probabilities assigned to
ordered copies at the same location are aggregated. 
Since, after conditioning on the deleted agent, the remaining mechanism is
\(\mathrm{PCD}\), the mechanism \(\mathrm{EPCD}\) is strategyproof for any even $n$, anonymous, neutral,
and peaks-only.

For any two randomized mechanisms \(f\) and \(h\), and for \(\lambda\in[0,1]\), we
write \(\lambda f+(1-\lambda)h\) for the mechanism that runs \(f\) with
probability \(\lambda\) and \(h\) with probability \(1-\lambda\). The odd-agent
mixture \(\frac12 \mathrm{RD}+\frac12 \mathrm{PCD}\) was studied by Rogowski and Dziubi{\'n}ski
\cite{rogowski2025improved}. In this paper, we extend it to the even-agent case and analyze the following unified
mechanism.

\begin{mechanism}[Mixed RD\&PCD (MRP)]\label{mec:m}
    Let $n\ge 3$ be the number of agents. Run \[
    \begin{cases}
        \frac12 \mathrm{RD}+\frac12 \mathrm{PCD}, & \text{if } n \text{ is odd} ,\\[1mm]
        \frac12 \mathrm{RD}+\frac12 \mathrm{EPCD}, & \text{if } n \text{ is even}.
    \end{cases}
\]
\end{mechanism}

Since
strategyproofness, anonymity, neutrality, and the peaks-only property are
preserved under fixed-probability mixtures, \(\mathrm{MRP}\) satisfies all these
properties. The case \(n\le 2\) is trivial because any peaks-only mechanism is optimal.

\section{Upper Bound}\label{sec:upper}

In this section we prove the upper bound guaranteed by the unified mechanism
defined in Section~\ref{sec:model}. The argument refines the cutting approach
of Rogowski and Dziubi{\'n}ski~\cite{rogowski2025improved}. Their analysis cuts
the circle at a carefully chosen point and upper-bounds the cost of the
mechanism under the induced line metric. We use the same reduction as a
starting point, but we keep track of the exact loss caused by the cut. The key
point is that the additional cost introduced by the cut is paid for by the
crossing-saving terms of the PCD-type mechanism.
The main result is the following theorem and the remainder of this section is devoted to the proof. 

\begin{theorem}\label{thm:main-upper}
    For every \(n\ge 3\), the mechanism \(\mathrm{MRP}\) is strategyproof,
    anonymous, neutral, and peaks-only, and it achieves an approximation ratio at most $\frac32$ on the circle.
\end{theorem}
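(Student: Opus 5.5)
The axiomatic properties are immediate from Section~\ref{sec:model}: RD, PCD (odd $n$) and EPCD (even $n$) are each strategyproof, anonymous, neutral and peaks-only, and fixed-probability mixtures preserve all four. So the work is the bound $\mathrm{SC}(\mathbf x,\mathrm{MRP}(\mathbf x))\le\frac32\,\mathrm{OPT}(\mathbf x)$.

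\textbf{Normalization and decomposition.} Both mechanisms are peaks-only, so I will write both costs as weighted sums of pairwise distances $D_{ij}=d(x_i,x_j)$. For RD,
\[
\mathrm{SC}(\mathrm{RD})=\frac1n\sum_{i,j}D_{ij}.
\]
For PCD,
\[
\mathrm{SC}(\mathrm{PCD})=\sum_t g_{t+k}\sum_j d(z_t,z_j).
\]
By neutrality and anonymity, I place an optimal location at $0$, so $\mathrm{OPT}=\sum_i|x_i|$ with $x_i\in[-\frac12,\frac12]$. I then cut the circle at the antipode $\pm\frac12$, which gives the line metric $d'(u,v)=|u-v|\ge d(u,v)$. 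The triangle inequality through $0$ gives $d'(x_i,x_j)\le|x_i|+|x_j|$, and summing recovers the $2-\frac2n$ bound for RD. The slack is concentrated on pairs lying on opposite sides of $0$, where $d(x_i,x_j)=\min\{|x_i|+|x_j|,\,1-|x_i|-|x_j|\}$, and on the PCD side.

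\textbf{Target inequality.} I aim to show
\[
\mathrm{SC}(\mathrm{RD})+\mathrm{SC}(\mathrm{PCD})\le 3\,\mathrm{OPT}.
\]
Since $\mathrm{SC}(\mathrm{RD})\le 2\,\mathrm{OPT}-\frac2n\mathrm{OPT}$, it would suffice to have $\mathrm{SC}(\mathrm{PCD})\le\mathrm{OPT}+(\text{crossing savings of RD})$. The plan is to split agents into $L=\{x_i<0\}$ and $R=\{x_i\ge 0\}$. Then $d\le d'$ exactly, and on cross pairs
\[
d(x_i,x_j)=d'(x_i,x_j)-\max\{0,\;2(|x_i|+|x_j|)-1\}.
\]
For PCD, I will write the probability of $z_t$ as the opposing gap. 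Summing over $t$, the cost becomes $\sum_{\text{gaps }g}g\cdot(\text{cost of the point opposite }g)$. Rewriting each gap $g_s$ as an integral over points $y$ in that arc gives
\[
\mathrm{SC}(\mathrm{PCD})=\int_G \mathrm{SC}\bigl(\mathbf x,\,z(y)\bigr)\,dy,
\]
where $z(y)$ is the agent ``opposite'' $y$, i.e.\ the $k$-th agent clockwise from $y$, which is roughly a median seen from $y$. For $y$ near the antipode $\frac12$ (the region opposite $0$), $z(y)$ is close to the median of the line-cut profile, so its cost is close to $\mathrm{OPT}$. For $y$ near $0$, $z(y)$ is near the antipode, and the cost is large but bounded by the line-metric quantity minus exactly the crossing corrections. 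My plan is to verify pointwise in $y$, or after pairing $y$ with $y+\frac12$, that
\[
\mathrm{SC}(z(y))+\mathrm{SC}(z(y+\tfrac12))\le 2\,\mathrm{OPT}+2\sum_{i\in L,\,j\in R}\frac{\max\{0,\,2(|x_i|+|x_j|)-1\}}{n}+\frac{2}{n}\mathrm{OPT}.
\]
I will do this via a counting argument on how many agents lie in each half-arc.

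\textbf{Even case and main obstacle.} For even $n$, EPCD is the average over deletions of PCD on $n-1$ agents, and its probabilities are given by \eqref{eq:epqq}. Applying the odd inequality to each reduced profile bounds the retained agents' cost. The deleted agent's expected distance must be added back, and the RD term for $n$ agents must be matched against the average of RD terms over the reduced profiles. The identity $\frac1n\sum_{\ell}\frac{1}{n-1}\sum_{i,j\ne \ell}D_{ij}=\frac{n-2}{n(n-1)}\sum_{i,j}D_{ij}$ should make this bookkeeping close. The step I expect to be hardest is the pointwise charging inequality for the PCD integral: it must show that the line-cut overestimate is exactly absorbed by the crossing terms, uniformly in configurations. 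I would first try a smoothing or extremal argument, moving agents to the antipode or to $0$, to reduce to profiles with agents at few positions. There the inequality becomes a checkable polynomial inequality in the half-arc counts.
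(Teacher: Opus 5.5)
\textbf{The odd-case reduction fails.} Your reduction decouples RD from PCD, and that cannot work. Once you bound $\mathrm{SC}(\mathrm{RD})$ by its worst case $(2-\frac2n)\mathrm{OPT}$, you are forced to prove that $\mathrm{SC}(\mathrm{PCD})$ is at most $(1+O(\frac1n))\mathrm{OPT}$ plus crossing corrections. Your pointwise inequality, integrated over $y$, gives exactly such a statement, and it is false.

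Take the paper's complementary profile (Example~\ref{exam:com}): $k$ agents at $-x$, $k$ at $0$, one at $\frac12-x$, with $x=1/(4\sqrt k)$. Every crossing term $\max\{0,2(|x_i|+|x_j|)-1\}$ vanishes, since the extreme cross pair has $|x_i|+|x_j|=\frac12$. Yet $\mathrm{SC}(\mathrm{PCD})/\mathrm{OPT}\to 2$. Pointwise, for $y$ in the arc of length $x$ whose opposite agent is $\frac12-x$, you have $\mathrm{SC}(z(y))=k(1-x)$. Your right-hand side is only $(2+\frac2n)\mathrm{OPT}=O(\sqrt k)$. No smoothing argument can rescue an inequality that fails on an explicit profile.

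The mixture works precisely because RD and PCD have complementary worst cases. So the missing idea is to spend RD's \emph{profile-dependent} slack on PCD's excess. The paper writes
\[
C'_{\mathrm{RD}}+C'_{\mathrm{PCD}}-3\mathrm{OPT}=B-R .
\]
Here $R=2\mathrm{OPT}-C'_{\mathrm{RD}}$ is linear in the increments $\alpha_r,\beta_s$, and $B=C'_{\mathrm{PCD}}-\mathrm{OPT}=\sum_{r\le s}2(k-s+r)\alpha_r\beta_s$ is bilinear. It distributes $R$ over the cells $(r,s)$ with weights $\beta_s/y_r$ and $\alpha_r/x_s$ and applies Cauchy--Schwarz. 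The uncovered residual is nonzero only when $x_s+y_r>\frac12$, and it is then charged, via monotonicity of $\tau$, to the crossing savings of PCD itself. The crossing savings of RD, which your plan tries to use, are simply discarded.

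\textbf{The even-case bookkeeping does not close either.} Apply the odd inequality $\mathrm{SC}(\mathrm{RD})+\mathrm{SC}(\mathrm{PCD})\le 3\mathrm{OPT}$ to each reduced profile $\mathbf x_{-\ell}$, add back the deleted agent, and use your averaging identity for RD. With $S=\sum_{i,j}D_{ij}$, the resulting bound is
\[
\mathrm{SC}(\mathrm{RD})+\mathrm{SC}(\mathrm{EPCD})\le\frac{S}{n(n-1)}+\frac3n\sum_{\ell}\mathrm{OPT}(\mathbf x_{-\ell})+\frac1n\sum_{\ell}\mathbb E\,d(x_\ell,Y_\ell).
\]
Every term except the odd inequality is exact here. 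For $n=4$ and $\mathbf x=(-\frac14,0,0,\frac14)$ you have $\mathrm{OPT}=\frac12$ and $S=3$. The reduced optima are $\frac14,\frac12,\frac12,\frac14$, and the deleted agents' expected distances are $\frac14,\frac18,\frac18,\frac14$. The right-hand side is therefore $\frac14+\frac98+\frac3{16}=\frac{25}{16}>\frac32=3\mathrm{OPT}$, although the true value is $\frac{11}{8}$.

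The slack lost inside each reduced instance cannot be recovered this way, so the theorem at $n=4$ is not reached. This is why the paper analyzes EPCD directly through its averaged-gap probabilities. It shows that the boundary terms involving $\alpha_1,\beta_k$ are non-positive. It then splits each interior off-diagonal cell into a northeast and a southwest copy, and charges each copy's residual to half of the crossing savings at the two adjacent reports.
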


The strategyproofness and symmetry properties follow directly from the
corresponding properties of \(\mathrm{RD}\), \(\mathrm{PCD}\), and \(\mathrm{EPCD}\), and from closure under fixed-probability mixtures. 
Before proving the approximation ratio $\frac32$, we first give examples to show the tightness of the approximation analysis and justify the choice of the probabilities in \(\mathrm{MRP}\).


\begin{proposition}[Tightness of \(\mathrm{MRP}\)]\label{prop:tight-examples}
    For arbitrarily large odd and even $n$, there exist instances such that the  ratio between the social cost of \(\mathrm{MRP}\) and the optimal social cost is
      arbitrarily close to \(\frac32\). 
\end{proposition}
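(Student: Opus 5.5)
The plan is to use one explicit, very unbalanced family of profiles and compute both components of $\mathrm{MRP}$ on it exactly. In this family $\mathrm{RD}$ has ratio tending to $2$, while $\mathrm{PCD}$ (resp.\ $\mathrm{EPCD}$) is exactly optimal. The equal mixture then has ratio tending to $\frac{1}{2}(2+1)=\frac32$.

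\emph{The instance and $\mathrm{OPT}$.} For $n\ge 3$, put $n-1$ agents at $0$ and one agent at $\frac12$, the antipodal point. Moving a facility away from $0$ increases the cost of the $n-1$ agents at $0$ at rate $n-1$ and decreases the cost of the single agent by at most rate $1$. Hence $y=0$ is optimal and $\mathrm{OPT}(\mathbf x)=\frac12$.

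\emph{The $\mathrm{RD}$ component.} With probability $\frac{n-1}{n}$, $\mathrm{RD}$ returns $0$, at social cost $\frac12$. With probability $\frac1n$ it returns $\frac12$, at social cost $\frac{n-1}{2}$. So
\[
\mathrm{SC}(\mathbf x,\mathrm{RD}(\mathbf x))=\frac{n-1}{2n}+\frac{n-1}{2n}=\frac{n-1}{n},
\]
and the $\mathrm{RD}$ ratio is $\frac{2(n-1)}{n}$.

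\emph{The $\mathrm{PCD}$ component, odd $n=2k+1$.} Order the copies as $z_1,\dots,z_{2k}=0$ and $z_{2k+1}=\frac12$. Then $g_1=\dots=g_{2k-1}=0$ and $g_{2k}=g_{2k+1}=\frac12$. The antipodal copy $z_{2k+1}$ receives probability $g_{3k+1}=g_k$, since indices are taken modulo $2k+1$. Because $1\le k\le 2k-1$ for $k\ge1$, this gap is $0$. So $\mathrm{PCD}$ places the facility at $0$ with probability one and is optimal.

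\emph{The $\mathrm{EPCD}$ component, even $n=2k\ge4$.} Here $g_{2k-1}=g_{2k}=\frac12$ and all other gaps are $0$. By \eqref{eq:epqq}, the copy $z_{2k}=\frac12$ receives probability $\frac12(g_{3k-1}+g_{3k})=\frac12(g_{k-1}+g_k)$. Both indices lie in $[1,2k-2]$ when $k\ge2$, so this probability is $0$ and $\mathrm{EPCD}$ is also optimal.

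\emph{Conclusion.} Combining the two components, the ratio of $\mathrm{MRP}$ on this instance is
\[
\frac12\cdot\frac{2(n-1)}{n}+\frac12\cdot 1=\frac32-\frac1n ,
\]
for both parities. This tends to $\frac32$ as $n\to\infty$, which proves the proposition.

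\emph{Where care is needed.} The only delicate point is the modular index bookkeeping in the $\mathrm{PCD}$ and $\mathrm{EPCD}$ formulas, together with the tie-breaking among coincident copies. Since probabilities of coincident copies are aggregated, any consistent tie-breaking gives the same output, so I would fix the ordering above and check the indices carefully.

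\emph{Justifying the equal weights (optional).} To explain why the weights are equal, I would add a complementary family in which $\mathrm{RD}$ is near-optimal but $\mathrm{PCD}$ costs roughly twice $\mathrm{OPT}$. For $\lambda\mathrm{RD}+(1-\lambda)\mathrm{PCD}$, the first family forces ratio about $1+\lambda$ and the second about $2-\lambda$. Both are at most $\frac32$ only at $\lambda=\frac12$. For the second family I would try about half the agents clustered at $0$ and the rest spread near the antipode, so that the opposing gaps of the clustered agents are small. Finding and verifying this family is the step I expect to be hardest, but it is not needed for the proposition itself.
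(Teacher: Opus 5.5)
Your proof is correct and follows the paper's approach: you pick an instance where the $\mathrm{PCD}$/$\mathrm{EPCD}$ component is exactly optimal and $\mathrm{RD}$ has ratio $2-\frac2n$, so the equal mixture has ratio exactly $\frac32-\frac1n$. The only difference is the instance. You use a single antipodal outlier, while the paper places outliers at $\pm\frac14$; your version is slightly simpler, and its index checks already hold for every $n\ge3$, whereas the paper's needs $n\ge5$ (odd) or $n\ge6$ (even).
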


\begin{proof}
   Consider the profile with one agent at
    \(-\frac14\), one agent at \(\frac14\), and \(n-2\) agents at \(0\). The optimal
    location is \(0\), and the optimal social cost is \(\mathrm{OPT}=\frac12\). If
    the facility is placed at either nonzero report, the social cost is
    \(\frac12+\frac{(n-2)}{4}=\frac{n}{4}\). Therefore the cost of \(\mathrm{RD}\) is
    \[
        C_{\mathrm{RD}}
        =
        \frac{2}{n}\cdot \frac{n}{4}
        +
        \frac{n-2}{n}\cdot \frac12
        =
        1-\frac1n .
    \]
   For odd \(n=2k+1\) with \(k\ge 2\), the \(\mathrm{PCD}\) mechanism selects a copy of
    \(0\) with probability one, because in the cyclic order  \(-\frac14,0,\ldots,0,\frac14\) all reports with  nonzero opposing gaps are copies of \(0\). 
    Hence \(C_{\mathrm{PCD}}=\mathrm{OPT}\). 
    For even \(n=2k\) with
    \(k\ge 3\), the same conclusion  \(C_{\mathrm{EPCD}}=\mathrm{OPT}\)  follows from the formula
    \(\Pr[z_t]=\frac12(g_{t+k-1}+g_{t+k})\) for \(\mathrm{EPCD}\). 
    Consequently, for both parities, the ratio 
    \[
        \frac{\mathrm{SC}(\mathrm{MRP})}{\mathrm{OPT}}
        =
        \frac{\frac12(C_{\mathrm{RD}}+\mathrm{OPT})}{\mathrm{OPT}}
        =
        \frac32-\frac1n 
    \]
   converges to $\frac32$ as $n\to \infty$. 
\end{proof}

In this example, \(\mathrm{RD}\) is
    a 2-approximation, and \(\mathrm{PCD}\) or \(\mathrm{EPCD}\) is optimal. We further provide  a complementary example in Example~\ref{exam:com}, where \(\mathrm{RD}\) is close to optimal, and
    \(\mathrm{PCD}\) or \(\mathrm{EPCD}\) is a \(2\)-approximation. Together these examples
    explain why the equal probability mixture in our \(\mathrm{MRP}\) mechanism is the natural balance.

\begin{example}\label{exam:com}
     Let  \(n=2k+1\) be odd, and set \(x=1/(4\sqrt{k})\). Consider the profile $\mathbf x$ with
    \(k\) agents at \(-x\), \(k\) agents at \(0\), and one agent at
    \(1/2-x\). The support is contained in a semicircle of length \(1/2\), and
    \(0\) is a median point in this semicircle representation. Hence \(0\) is
    optimal, with
    \[
        \mathrm{OPT}
        =
        kx+\left(\frac12-x\right)
        =
        (k-1)x+\frac12 .
    \]
    Denote by \(L\), \(O\), and \(R\) the locations \(-x\), \(0\), and
    \(1/2-x\), respectively. Their social costs are
    \[
        SC(L,\mathbf x)=\mathrm{OPT}+x,\qquad SC(O,\mathbf x)=\mathrm{OPT},\qquad SC(R,\mathbf x)=k(1-x).
    \]
    From the three nonzero gaps \(x\), \(1/2-x\), and \(1/2\), the aggregated
    \(\mathrm{PCD}\) probabilities of \(L\), \(O\), and \(R\) are respectively
    \(1/2-x\), \(1/2\), and \(x\). Therefore
    \[
        C_{\mathrm{PCD}}
        =
        \left(\frac12-x\right)SC(L,\mathbf x)
        +
        \frac12\mathrm{OPT}
        +
        xSC(R,\mathbf x) = (1-x)\mathrm{OPT} + x\left(\frac12-x\right)+kx(1-x).
    \]
    Since \(\mathrm{OPT}=(k-1)x+1/2\) and \(x=1/(4\sqrt{k})\), we have \(\mathrm{OPT}\sim kx\). Also
    \(C_{\mathrm{PCD}}=\mathrm{OPT}+kx+O(1)\). Hence \(C_{\mathrm{PCD}}/\mathrm{OPT}\to 2\) when $k$ approaches infinity. On the other hand,
    \[
        C_{\mathrm{RD}}
        =
        \frac{kSC(L,\mathbf x)+k\mathrm{OPT}+SC(R,\mathbf x)}{2k+1}
        =
        \frac{2k\mathrm{OPT}+k}{2k+1},
    \]
    and thus \(C_{\mathrm{RD}}/\mathrm{OPT}\to 1\). Therefore the ratio of
    \(\frac12 \mathrm{RD}+\frac12 \mathrm{PCD}\) on this profile converges to \(3/2\).

 The even case has an analogous construction. Let \(n=2k\), set again
    \(x=1/(4\sqrt{k})\), and consider the profile with \(k-1\) agents at
    \(-x\), \(k-1\) agents at \(0\), and two agents at \(1/2-x\). The point
    \(0\) is optimal, and
    \[
        \mathrm{OPT}
        =
        (k-1)x+2\left(\frac12-x\right)
        =
        (k-3)x+1 .
    \]
    With the same notation \(L=-x\), \(O=0\), and \(R=1/2-x\), the social costs
    are
    \[
        SC(L,\mathbf x)=\mathrm{OPT}+2x,\qquad SC(O,\mathbf x)=\mathrm{OPT},\qquad SC(R,\mathbf x)=(k-1)(1-x).
    \]
    A direct application of the \(\mathrm{EPCD}\) probability formula shows that the
    aggregated probabilities of \(L\), \(O\), and \(R\) are again
    \(1/2-x\), \(1/2\), and \(x\), respectively. Hence
    \[
        C_{\mathrm{EPCD}}
        =
        \left(\frac12-x\right)SC(L,\mathbf x)
        +
        \frac12\mathrm{OPT}
        +
        xSC(R,\mathbf x) .
    \]
    Since \(\mathrm{OPT}\sim kx\), we again have \(C_{\mathrm{EPCD}}=\mathrm{OPT}+kx+O(1)\), and so
    \(C_{\mathrm{EPCD}}/\mathrm{OPT}\to 2\). Meanwhile,
    \[
        C_{\mathrm{RD}}
        =
        \frac{(k-1)SC(L,\mathbf x)+(k-1)\mathrm{OPT}+2SC(R,\mathbf x)}{2k}
        =
        \frac{k-1}{k}(\mathrm{OPT}+1),
    \]
    and therefore \(C_{\mathrm{RD}}/\mathrm{OPT}\to 1\). Therefore the ratio of
    \(\frac12 \mathrm{RD}+\frac12 \mathrm{EPCD}\) on this profile converges to \(3/2\).
\end{example}

\subsection{Setup}\label{subsec:main}




We will prove the approximation ratio for both parities separately,
and we begin with two lemmas that will be used in both cases. The first
lemma is a normalization of the profile, such that it suffices to consider those profiles with an optimal location at $0$. Since all mechanisms considered here are
anonymous and neutral, we may rotate the circle so that an optimal location is
the origin, and then relabel agents according to the clockwise order around the
circle. The second lemma allows us to cut the circle at the antipodal point of 0 (i.e., the glued point of $-\frac12$ and $\frac12$) and quantifies the increase of social cost after cutting, compared to the original social cost in the circle.
Frequently used notations in this section are summarized in Table \ref{tab:notation}.

\begin{lemma}[Normalization]\label{lem:normalization}
    Let \(f\) be an anonymous and neutral mechanism. For every profile
    \(\mathbf{x}\) with \(\mathrm{OPT}(\mathbf{x})>0\), there is a profile
    \(\mathbf{b}\) so that
    \(0\) is an optimal location and $f$ achieves the same ratio for $\mathbf b$ and $\mathbf x$. Moreover,  using the
    representatives in \([-\frac12,\frac12]\), the following forms can be
    imposed.
\begin{itemize}
    \item[(i)]  If \(n=2k+1\), then the agents can be indexed as
    \(-k,\ldots,-1,0,1,\ldots,k\) so that
    \[
        b_{-i}=-a_i,\qquad b_0=0,\qquad b_i=c_i,\qquad i\in[k],
    \]
    where \(0\le a_1\le\cdots\le a_k\le \frac12\) and
    \(0\le c_1\le\cdots\le c_k\le \frac12\).
    \item[(ii)] If \(n=2k\), then the agents can be indexed as
    \(-k,\ldots,-1,1,\ldots,k\) so that
    \[
        b_{-i}=-a_i,\qquad b_i=c_i,\qquad i\in[k],
    \]
    where \(0\le a_1\le\cdots\le a_k\le \frac12\) and
    \(0\le c_1\le\cdots\le c_k\le \frac12\).
\end{itemize}
    In both cases,
    \(
        \mathrm{OPT}(\mathbf{b})=\mathrm{SC}(\mathbf{b},0)=\sum_{i=1}^k(a_i+c_i).
    \)
\end{lemma}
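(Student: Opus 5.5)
We normalize in three steps: a rotation, a relabeling of agents, and a check that the social cost at $0$ has the stated form.

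\textbf{Rotation.} Let $y^*$ be an optimal location for $\mathbf{x}$; it exists by compactness. Let $\gamma$ be the rotation $z\mapsto z-y^*$ of $G$. It is an isometry, so $d(\gamma u,\gamma v)=d(u,v)$. Set $\mathbf{b}'=\gamma\circ\mathbf{x}$. Then for every $y$ we have $\mathrm{SC}(\mathbf{b}',\gamma y)=\mathrm{SC}(\mathbf{x},y)$. Hence $\mathrm{OPT}(\mathbf{b}')=\mathrm{OPT}(\mathbf{x})$ and $0=\gamma(y^*)$ is optimal for $\mathbf{b}'$. By neutrality, $f(\mathbf{b}')=\gamma\circ f(\mathbf{x})$. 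Taking expectations of the identity above under this pushforward gives $\mathrm{SC}(\mathbf{b}',f(\mathbf{b}'))=\mathrm{SC}(\mathbf{x},f(\mathbf{x}))$, so the ratio is preserved.

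\textbf{Relabeling.} Write each $b'_i$ by its representative in $(-\frac12,\frac12]$; the antipode $\frac12$ may be taken as either $\pm\frac12$. We need to split the agents into a ``left'' group with coordinates in $[-\frac12,0]$ and a ``right'' group with coordinates in $[0,\frac12]$, with the right cardinalities: $k$ on each side, plus one agent at $0$ when $n=2k+1$.

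This is where optimality of $0$ enters. Let $\ell$, $r$, $z$, $h$ be the numbers of agents strictly in $(-\frac12,0)$, strictly in $(0,\frac12)$, exactly at $0$, and exactly at the antipode. Moving the facility slightly clockwise from $0$ by $\varepsilon$ changes the cost by $\varepsilon(\ell+z-r-h)+o(\varepsilon)$. Optimality therefore gives $r+h\le \ell+z$, and symmetrically $\ell+h\le r+z$. These are median conditions: agents at $0$ and at the antipode are flexible and can be assigned to either side. We need $\ell\le k$ and $r\le k$ after reserving one agent at $0$ in the odd case, and the counts must match.

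From $\ell+h\le r+z$ and $\ell+r+z+h=n$ we get $2\ell+2h\le n$, hence $\ell+h\le \lfloor n/2\rfloor=k$; similarly $r+h\le k$. In the odd case, adding the two inequalities gives $2h\le 2z$, i.e.\ $h\le z$. Also $z\ge 1$: otherwise $n=\ell+r+h$ with $\ell+h\le k$ and $r+h\le k$ would give $n\le 2k$, a contradiction. So one zero agent can be reserved.

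We place all antipodal agents at $-\frac12$ on the left, or split them between $-\frac12$ and $\frac12$; both representatives are the same point of $G$ and have distance $\frac12$ from $0$. We then fill the remaining left and right slots with zero agents, possibly using $a_i=0$ or $c_i=0$. The counts work out: the zero agents remaining after any reservation number $n-\ell-r-h$ (minus one in the odd case), which exactly fills the $2k-\ell-r-h$ open slots.

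By anonymity, reindexing agents does not change $f$'s output distribution. Sorting each side gives the monotone sequences $a_i$ and $c_i$ in $[0,\frac12]$.

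\textbf{Cost formula.} With $\mathbf{b}$ in this form, $d(b_{\pm i},0)=a_i$ or $c_i$, since every representative lies in $[-\frac12,\frac12]$, and $d(b_0,0)=0$. Therefore $\mathrm{SC}(\mathbf{b},0)=\sum_{i=1}^k(a_i+c_i)$, and this equals $\mathrm{OPT}(\mathbf{b})$ because $0$ is optimal.

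The main obstacle is the counting step: showing that the flexible agents at $0$ and at the antipode can always be distributed so that each side has exactly $k$ agents. This is handled by the first-order optimality (median) inequalities derived above.
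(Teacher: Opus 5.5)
Your proposal is correct and follows the paper's proof essentially step for step. You rotate an optimum to $0$ using neutrality, derive the one-sided slope conditions $r+h\le \ell+z$ and $\ell+h\le r+z$, use them to bound the side counts, reserve a zero agent in the odd case, fill the remaining slots, and relabel by anonymity. Your bound $\ell+h\le k$, $r+h\le k$ is slightly sharper than the paper's $P,N\le\lfloor n/2\rfloor$; it makes every split of the antipodal agents admissible and gives $z\ge 1$ by direct counting.
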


\begin{proof}
    Let \(o\in G\) be an optimal location for \(\mathbf{x}\). By neutrality,
    rotating the circle so that \(o\) becomes \(0\) preserves both the mechanism's
    social cost and the optimal social cost. Thus the approximation ratio is
    unchanged. We henceforth assume that \(0\) is an optimal location.

    Let $P,N,Z,H$ denote the numbers of agents in $(0,\frac12)$, in $(-\frac12,0)$, at $0$, and at the antipodal point, respectively. The antipodal point is counted once, regardless of whether its representative is $-\frac12$ or $\frac12$. Thus $P+N+Z+H=n$.
    The one-sided rates of change of social cost when moving clockwise and counterclockwise away from $0$ are, respectively,
    \[
        N+Z-P-H\qquad\text{and}\qquad P+Z-N-H.
    \]
    In either direction, agents at $0$ become farther away and agents at the antipodal point become closer. These rates are derivatives at $0$, so no finite movement past another report is involved. Optimality of $0$ requires both rates to be nonnegative, giving
    \[
        P+H\le N+Z,\qquad N+H\le P+Z.
    \]
    Since the four counts sum to $n$, these inequalities imply $P,N\le\lfloor\frac n2\rfloor$. Adding them also gives $H\le Z$.

    If $n=2k+1$ and $Z=0$, then $H=0$ and the two inequalities force $P=N$, contradicting oddness of $n$. Hence $Z\ge1$, and we choose one agent at $0$ as the central agent. The two open sides contain $P$ and $N$ agents and have remaining capacities $k-P$ and $k-N$. These are nonnegative and sum to $Z+H-1$, exactly the number of remaining copies at $0$ or at the antipodal point. Assign these copies to fill the two capacities, using the appropriate representative of the antipodal point on each side.
    If $n=2k$, no central agent is removed. The capacities $k-P$ and $k-N$ instead sum to $Z+H$, so the same assignment yields exactly $k$ agents on each side.

    Finally, by anonymity, relabeling agents according to their clockwise order
    does not change the output of \(f\). Therefore the resulting
    profile has the claimed form and the same approximation ratio. Since
    distances from \(0\) to \(-a_i\) and \(c_i\) are \(a_i\) and \(c_i\),
    respectively, the optimal social cost is
    \(\mathrm{OPT}(\mathbf{b})=\sum_{i=1}^k(a_i+c_i)\).
\end{proof}

After normalization, we cut the circle at the antipodal point of \(0\), that is, the glued point of $-\frac12$ and $\frac12$. The new distance function is called the \emph{cut
metric},  which is the line metric \(d'(x,y)=|x-y|\) on the representatives in
\([-\frac12,\frac12]\). Compared to the original circle metric $d$, the cut metric can only increase distances. For any pair of points on the same side of \(0\), the increase is
zero. For a pair of points $(b_{-i},b_j)$ on different sides of $0$, where \(i,j\in[k]\), the increase is called the \emph{crossing saving}:
\begin{align}
    \sigma_{ij}
    =
    d'(-a_i,c_j)-d(-a_i,c_j)
    =
    \bigl(2(a_i+c_j)-1\bigr)_+ ,
    \label{eq:crossing-saving}
\end{align}
where \((z)_+=\max\{z,0\}\). The following lemma records how these quantities
relate the cut-metric social cost to the exact circle social cost.

\begin{lemma}[Cut-saving identity]\label{lem:cut-saving-identity}
    Fix a normalized profile $\mathbf b$. Let \(f\) be any peaks-only randomized mechanism.
    Let \(C_f\) be the exact circle social cost of \(f\), and \(C'_f\) the
    social cost computed with the cut metric \(d'\). Suppose that \(f\) selects
    the left report \(-a_i\) with probability \(p_i^-\), and the right report
    \(c_j\) with probability \(p_j^+\), for any $i,j\in[k]$. Then the cut saving of $f$ is
    \[
        C'_f-C_f
        =
        \sum_{i=1}^k p_i^-\sum_{j=1}^k\sigma_{ij}
        +
        \sum_{j=1}^k p_j^+\sum_{i=1}^k\sigma_{ij}.
    \]
\end{lemma}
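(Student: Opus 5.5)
The identity follows by writing both costs as expectations over the selected report and comparing them term by term. Since $f$ is peaks-only, the facility is one of the reports $-a_i$ or $c_j$ (and $0$ when $n$ is odd). Hence
\[
C_f=\sum_{i=1}^k p_i^-\,\mathrm{SC}(\mathbf b,-a_i)+\sum_{j=1}^k p_j^+\,\mathrm{SC}(\mathbf b,c_j)+p_0\,\mathrm{SC}(\mathbf b,0),
\]
and $C'_f$ has the same form with $\mathrm{SC}'$ computed under $d'$. If several ordered copies share a location, I will split the aggregated probability among them in any fixed way. Both sides of the identity are linear in that split, so this causes no issue. The plan is to compute $\mathrm{SC}'(\mathbf b,y)-\mathrm{SC}(\mathbf b,y)=\sum_{\ell}\bigl(d'(b_\ell,y)-d(b_\ell,y)\bigr)$ for each report $y$ and then take the expectation.

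For a single pair of points, I will use the fact that $d$ and $d'$ agree whenever both lie in $[-\frac12,0]$ or both lie in $[0,\frac12]$. In that case $|u-v|\le\frac12$, so $\min\{|u-v|,1-|u-v|\}=|u-v|$. In particular, the central agent at $0$ is at the same distance from every report under both metrics. So $y=0$ contributes nothing, and the central agent contributes nothing to any term.

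Now take $y=-a_i$. Right-side agents $c_j$ contribute exactly $d'(-a_i,c_j)-d(-a_i,c_j)=(a_i+c_j)-\min\{a_i+c_j,\,1-a_i-c_j\}=(2(a_i+c_j)-1)_+=\sigma_{ij}$, as in \eqref{eq:crossing-saving}. Left-side agents contribute zero. The total for this report is therefore $\sum_j\sigma_{ij}$. By the symmetric argument, $y=c_j$ gives $\sum_i\sigma_{ij}$. Multiplying by the probabilities and summing yields the stated formula.

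The only delicate point is the antipodal point, which may carry the representative $-\frac12$ on the left or $\frac12$ on the right, as assigned in Lemma~\ref{lem:normalization}. I will treat it through its chosen representative. The same-side claim holds for the closed half-intervals, including endpoints, and the crossing formula gives the correct value even when $a_i=\frac12$ or $c_j=\frac12$. For example, $a_i=c_j=\frac12$ gives $\sigma_{ij}=1=d'(-\frac12,\frac12)-0$. So no separate case is needed. I expect no real obstacle beyond checking these boundary conventions.
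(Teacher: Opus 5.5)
Your proposal is correct and follows the paper's own argument. Both condition on which report the peaks-only mechanism selects, observe that same-side pairs and the central agent have identical distances under $d$ and $d'$, and note that each crossing pair $(-a_i,c_j)$ contributes exactly $\sigma_{ij}$ before taking expectations. Your extra checks are sound refinements of details the paper leaves implicit: the closed half-intervals, the two possible representatives of the antipodal point, and splitting aggregated probabilities among coinciding copies.
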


\begin{proof}
    Since \(f\) is peaks-only, its realized outcome is one of the reported
    locations. Fix first the event that the mechanism chooses \(-a_i\). For another
    agent also located on the left side, the cut distance and the circle distance
    to \(-a_i\) are the same. The same is true for any agent located at \(0\), if
    such an agent exists. The only distances that may change are those from
    \(-a_i\) to right-side reports \(c_j\). By
    \eqref{eq:crossing-saving}, the cut metric overestimates the distance from
    \(-a_i\) to \(c_j\) by exactly \(\sigma_{ij}\). Hence, conditional on
    choosing \(-a_i\), the total increase in social cost is
    \(\sum_{j=1}^k\sigma_{ij}\).

    The argument for choosing a right-side report \(c_j\) is symmetric: the total
    increase is \(\sum_{i=1}^k\sigma_{ij}\). If the mechanism chooses \(0\), the
    increase is zero because the distance from \(0\) to every report is the same
    under \(d\) and \(d'\). Taking expectation over the mechanism's lottery gives
    the claimed identity.
\end{proof}

We use \(C_{\mathrm{RD}}\) and \(C_{\mathrm{PCD}}\) for the exact circle social costs of \(\mathrm{RD}\) and
\(\mathrm{PCD}\), and \(C'_{\mathrm{RD}}\) and \(C'_{\mathrm{PCD}}\) for the corresponding social costs
computed under the cut metric \(d'\). The cut-saving identity for \(\mathrm{RD}\) has a particularly simple form. In the odd case
\(n=2k+1\), every reported copy is selected with probability \(1/(2k+1)\), and
there is one copy at \(0\). Thus the cut saving is
\[
    C'_{\mathrm{RD}}-C_{\mathrm{RD}}
    =
    \frac{2}{2k+1}\sum_{i=1}^k\sum_{j=1}^k\sigma_{ij}.
\]
In the even case \(n=2k\), every reported copy is selected with probability
\(1/(2k)\), and hence
\[
    C'_{\mathrm{RD}}-C_{\mathrm{RD}}
    =
    \frac{1}{k}\sum_{i=1}^k\sum_{j=1}^k\sigma_{ij}.
\]

\begin{table}[H]
\centering
\caption{Frequently used notation in the upper-bound analysis.}
\label{tab:notation}
\begin{tabular}{p{0.22\textwidth}p{0.66\textwidth}}
\toprule
\textbf{Notation} & \textbf{Meaning} \\
\midrule





$\mathbf{b}$ & A normalized profile for which $0$ is an optimal facility location. \\

$a_i,c_i$ & Distances of reports from $0$ on the two sides of the cut: the normalized reports are $-a_i$ and $c_i$. \\

$\mathrm{OPT}$ & The optimal social cost of the normalized profile:
$\mathrm{OPT}(\mathbf{b})=\sum_{i=1}^k(a_i+c_i)$. \\

$d'$ & The cut metric obtained after cutting the circle at the point antipodal to $0$. \\

$C_f$, $C'_f$ & The social cost of mechanism $f$ under the true circle metric $d$, and under the cut metric $d'$, respectively. \\

$\sigma_{ij}$ & The saving from using the true circle metric instead of the cut metric for the crossing pair $(-a_i,c_j)$:
$\sigma_{ij}=\bigl(2(a_i+c_j)-1\bigr)_+$. \\

$x_i,y_i$ & Reversed-coordinate notation used in the proof:
$x_i=a_i$ and $y_i=c_{k+1-i}$. \\

$\alpha_i,\beta_i$ & Coordinate increments:
$\alpha_i=x_i-x_{i-1}$ and $\beta_i=y_i-y_{i+1}$, where $x_0=0$ and $y_{k+1}=0$. \\

$\tau_{ij}$ & Crossing saving in reversed coordinates:
$\tau_{ij}=\bigl(2(x_i+y_j)-1\bigr)_+$. \\


$F_{\mathrm{odd}}$, $F_{\mathrm{even}}$ & Cut-metric excess terms:
$F_{\mathrm{odd}}=C'_{\mathrm{RD}}+C'_{\mathrm{PCD}}-3\mathrm{OPT}$ and
$F_{\mathrm{even}}=C'_{\mathrm{RD}}+C'_{\mathrm{EPCD}}-3\mathrm{OPT}$. \\

\bottomrule
\end{tabular}
\end{table}

\subsection{Odd Number of Agents}\label{subsec:odd}

We first prove the $\frac32$-approximation for odd $n$. Throughout this subsection,
let \(n=2k+1\), and we consider a normalized profile \(b_{-i}=-a_i,\,\, b_0=0,\,\, b_i=c_i, \) for \(i\in[k]\),
where \(0\le a_1\le\cdots\le a_k\le 1/2\) and
\(0\le c_1\le\cdots\le c_k\le 1/2\). We also set \(a_0=c_0=0\), and write the
optimal social cost as
\(\mathrm{OPT}=\sum_{i=1}^k(a_i+c_i)\) according to Lemma~\ref{lem:normalization}.

Our goal is to prove
\[C_{\mathrm{RD}}+C_{\mathrm{PCD}}\le 3\mathrm{OPT}.\] Since the mechanism \(\frac12 \mathrm{RD}+\frac12 \mathrm{PCD}\)
averages the two costs, this will imply the desired ratio of \(\frac32\).
The proof proceeds in three steps. First, we work with the cut metric \(d'\),
which only overestimates the true circle metric, and compute the corresponding
cut-metric social costs \(C'_{\mathrm{RD}}\) and \(C'_{\mathrm{PCD}}\) in Lemma~\ref{lem:odd-cut-costs}. Second,
we define the \emph{cut excess}
\begin{align}
    F_{\mathrm{odd}}
    =
    C'_{\mathrm{RD}}+C'_{\mathrm{PCD}}-3\mathrm{OPT},
    \label{eq:odd-cut-excess-def}
\end{align}
which is the amount by which the total cut-metric social cost of the
two mechanisms exceeds \(3\mathrm{OPT}\),
and we rewrite it in increment form in Lemma~\ref{lem:odd-increment-form}.  Third, in Lemma~\ref{lem:odd-domination} we show that the cut
excess $F_{\mathrm{odd}}$ is no larger than the cut saving of \(\mathrm{PCD}\), namely
\(C'_{\mathrm{PCD}}-C_{\mathrm{PCD}}\). Since the cut saving of
\(\mathrm{RD}\) is nonnegative, these two facts together immediately imply
\(C_{\mathrm{RD}}+C_{\mathrm{PCD}}\le 3\mathrm{OPT}\).

Let the ordered reported copies in clockwise order be \(-a_k,\ldots,-a_1,0,c_1,\ldots,c_k\) as in Lemma~\ref{lem:normalization}.
From the definition of \(\mathrm{PCD}\), the probability assigned to a left-side report $b_{-i}$,
to the origin $0$, and to a right-side report $b_i$ is
\(p_i^- = c_{k+1-i}-c_{k-i}
, \,\, p_0 = 1-a_k-c_k, \,\,\text{and}\,\,  p_i^+ = a_{k+1-i}-a_{k-i},\)
respectively. These probabilities are just the opposing gaps in the
clockwise order.
The next lemma expresses the cut-metric social costs in terms of the coordinates
\(a_i\) and \(c_i\).

\begin{lemma}\label{lem:odd-cut-costs}
    For every normalized odd-agent profile, the cut-metric social costs satisfy
    \begin{equation*}
        \begin{gathered}
             C'_{\mathrm{RD}}
        =
        \sum_{i=1}^k \frac{4i}{2k+1}(a_i+c_i),\quad\text{and}\\
        C'_{\mathrm{PCD}}
        =
        \mathrm{OPT}+
        \sum_{r=1}^k(2r-1)
        \left(
            c_{k+1-r}(a_r-a_{r-1})
            +
            a_{k+1-r}(c_r-c_{r-1})
        \right).
        \end{gathered}
    \end{equation*}
\end{lemma}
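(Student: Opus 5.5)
The plan is to compute both cut-metric costs directly from the line metric $d'$ on the sorted representatives $-a_k\le\cdots\le -a_1\le 0\le c_1\le\cdots\le c_k$. In both cases I would first compute the cut-metric social cost $\mathrm{SC}'(y)$ at each report $y$, and then average against the mechanism's lottery. Everything is elementary. The only care needed is the index bookkeeping in the Abel summation for PCD.

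\emph{RD.} Since every one of the $2k+1$ copies is chosen with probability $\frac1{2k+1}$, we get
\[
C'_{\mathrm{RD}}=\tfrac{2}{2k+1}\sum_{\{u,v\}}|u-v|,
\]
where the sum runs over unordered pairs of copies. I would split the pairs into four groups.
\begin{itemize}
\item Pairs involving $0$ contribute $\mathrm{OPT}$.
\item Crossing pairs $(-a_i,c_j)$ contribute $\sum_{i,j}(a_i+c_j)=k\,\mathrm{OPT}$.
\item Pairs within the left side contribute $\sum_{i<j}(a_j-a_i)=\sum_i(2i-k-1)a_i$.
\item Pairs within the right side contribute the analogous sum with $c$ in place of $a$.
\end{itemize}
Adding these, the coefficient of $a_i$ (and likewise of $c_i$) is $1+k+(2i-k-1)=2i$. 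Multiplying by $\frac{2}{2k+1}$ gives the stated formula $\sum_i\frac{4i}{2k+1}(a_i+c_i)$.

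\emph{PCD, pointwise cost.} Consider the left report $-a_i$. Its cut distances are $|a_j-a_i|$ to the left copies, $a_i$ to the origin, and $a_i+c_j$ to the right copies. Subtracting $\mathrm{OPT}=\sum_j(a_j+c_j)$, the $c_j$ terms cancel, and splitting the left sum at $j<i$ versus $j\ge i$ gives
\[
\mathrm{SC}'(-a_i)-\mathrm{OPT}=(2i-1)a_i-2\sum_{j<i}a_j=\sum_{r=1}^{i}(2r-1)(a_r-a_{r-1}),
\]
where the last equality is a summation by parts using $a_0=0$. By symmetry, $\mathrm{SC}'(c_i)-\mathrm{OPT}=\sum_{r\le i}(2r-1)(c_r-c_{r-1})$. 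At the origin, $\mathrm{SC}'(0)=\mathrm{OPT}$.

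\emph{PCD, averaging.} The PCD probabilities $p_i^-$, $p_0$, $p_i^+$ sum to $1$, so
\[
C'_{\mathrm{PCD}}=\mathrm{OPT}+\sum_i p_i^-\bigl(\mathrm{SC}'(-a_i)-\mathrm{OPT}\bigr)+\sum_i p_i^+\bigl(\mathrm{SC}'(c_i)-\mathrm{OPT}\bigr).
\]
Now substitute the expression from the previous step and swap the order of summation. The weight multiplying $(2r-1)(a_r-a_{r-1})$ becomes
\[
\sum_{i=r}^k p_i^-=\sum_{i=r}^k\bigl(c_{k+1-i}-c_{k-i}\bigr),
\]
which telescopes to $c_{k+1-r}-c_0=c_{k+1-r}$. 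Symmetrically, the weight on $(2r-1)(c_r-c_{r-1})$ telescopes to $a_{k+1-r}$. This gives exactly the stated expression for $C'_{\mathrm{PCD}}$.

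The main point to verify carefully is this reindexing and telescoping step, specifically that the opposing-gap probabilities $p_i^-=c_{k+1-i}-c_{k-i}$ pair with the correct tail range $i\ge r$. Ties among reports cause no trouble, because probabilities are attached to ordered copies and the formulas are linear in them.
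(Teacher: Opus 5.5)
Your proof is correct.

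\textbf{The $C'_{\mathrm{PCD}}$ part} is essentially the paper's argument. You write the pointwise cut cost in telescoped form, $S'(-a_i)-\mathrm{OPT}=\sum_{r\le i}(2r-1)(a_r-a_{r-1})$, and likewise for $c_i$. You then swap the order of summation and telescope the tail sums of the opposing-gap probabilities to $c_{k+1-r}$ and $a_{k+1-r}$. The only cosmetic difference is how you get the pointwise formula: you expand directly and then sum by parts. The paper instead moves the facility from $c_{i-1}$ to $c_i$ and counts the $k+i$ agents weakly to the left against the $k-i+1$ weakly to the right.

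\textbf{The $C'_{\mathrm{RD}}$ part} takes a different route. You write $C'_{\mathrm{RD}}=\frac{2}{2k+1}\sum_{\{u,v\}}|u-v|$ and split the pairs into origin, crossing, and same-side pairs. This gives the coefficient $1+k+(2i-k-1)=2i$ immediately. The paper instead averages the pointwise costs $S'(-a_i)$ and $S'(c_i)$ over the $2k+1$ dictators, and then collects the coefficient $4h-2k-1$ of $a_h+c_h$ relative to $\mathrm{OPT}$.

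Your pair count is self-contained and makes explicit that RD's cost is a symmetric average of pairwise distances. The paper's version is more economical, because it reuses one pointwise formula for both mechanisms.
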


\begin{proof}
    We first compute $C'_{\mathrm{RD}}$.  Let \(S'(y)\) denote the social cost of location \(y\) under the cut metric.
    Since the cut metric is the line metric on the representatives in
    \([-\frac12,\frac12]\), we have \(S'(0)=\mathrm{OPT}\). Fix an index \(i>0\). When the facility moves from
    \(c_{i-1}\) to \(c_i\), the distance to each of the \(k+i\) agents who are weakly to the left
    of \(c_{i-1}\) increases by \(c_i-c_{i-1}\), while the distance to each of the
    \(k-i+1\) agents who are weakly to the right of \(c_i\) decreases by \(c_i-c_{i-1}\).
    Therefore
    \[
        S'(c_i)-S'(c_{i-1})
        =
        (k+i)(c_i-c_{i-1})-(k-i+1)(c_i-c_{i-1})
        =
        (2i-1)(c_i-c_{i-1}).
    \]
    Summing these increments gives
    \begin{equation}\label{eq:sci}
        S'(c_i)-\mathrm{OPT}=
        \sum_{h=1}^i (2h-1)(c_h-c_{h-1})
        =
        (2i-1)c_i-2\sum_{h=1}^{i-1}c_h.
    \end{equation}
    By symmetry,
    \[
        S'(-a_i)-\mathrm{OPT}
        =
        (2i-1)a_i-2\sum_{h=1}^{i-1}a_h .
    \]
    Averaging these quantities over the \(2k+1\) equally likely dictators,
    including the dictator at \(0\), yields
    \[
    \begin{aligned}
        C'_{\mathrm{RD}}
        &=
        \frac{1}{2k+1}
        \left(
            S'(0)+\sum_{i=1}^k\bigl(S'(c_i)+S'(-a_i)\bigr)
        \right)  \\
        &=
        \mathrm{OPT}+
        \frac{1}{2k+1}
        \sum_{i=1}^k
        \bigl(S'(c_i)-\mathrm{OPT}+S'(-a_i)-\mathrm{OPT}\bigr)  \\
        &=
        \mathrm{OPT}+
        \frac{1}{2k+1}
        \sum_{i=1}^k
        \left(
            (2i-1)(a_i+c_i)
            -
            2\sum_{h=1}^{i-1}(a_h+c_h)
        \right).
    \end{aligned}
    \]
    We now collect the coefficient of each \(a_h+c_h\) in the last summation. The
    term \(a_h+c_h\) appears once in the first part with coefficient \(2h-1\).
    It also appears in the double sum \(\sum_{i=1}^k\sum_{h=1}^{i-1}(a_h+c_h)\)
    exactly for those \(i\) satisfying \(i>h\), namely for
    \(i=h+1,\ldots,k\). Thus it appears \(k-h\) times in the double sum, with
    coefficient \(-2\) each time. Therefore the total coefficient of
    \(a_h+c_h\) is \( (2h-1)-2(k-h)=4h-2k-1\).
    Hence
    \[
    \begin{aligned}
        C'_{\mathrm{RD}}
        &=
        \mathrm{OPT}+
        \frac{1}{2k+1}
        \sum_{h=1}^k
        (4h-2k-1)(a_h+c_h)  \\
        &=
        \frac{1}{2k+1}
        \sum_{h=1}^k
        \bigl((2k+1)+(4h-2k-1)\bigr)(a_h+c_h)  \\
        &=
        \sum_{h=1}^k
        \frac{4h}{2k+1}(a_h+c_h).
    \end{aligned}
    \]

    We now compute \(C'_{\mathrm{PCD}}\). Conditional on choosing \(0\), the cut-metric social
    cost is \(\mathrm{OPT}\). Therefore
    \[
        C'_{\mathrm{PCD}}
        =
        \mathrm{OPT}+
        \sum_{i=1}^k p_i^-\bigl(S'(-a_i)-\mathrm{OPT}\bigr)
        +
        \sum_{i=1}^k p_i^+\bigl(S'(c_i)-\mathrm{OPT}\bigr).
    \]
    For the right-side reports, recall that \(p_i^+=a_{k+1-i}-a_{k-i}\).
    Using the telescoping representation of \eqref{eq:sci},
  the right-side contribution is
    \[
    \begin{aligned}
        \sum_{i=1}^k p_i^+\bigl(S'(c_i)-\mathrm{OPT}\bigr)
        &=
        \sum_{i=1}^k
        (a_{k+1-i}-a_{k-i})
        \sum_{h=1}^i (2h-1)(c_h-c_{h-1})  \\
        &=
        \sum_{i=1}^k
        \sum_{h=1}^i
        (a_{k+1-i}-a_{k-i})(2h-1)(c_h-c_{h-1})  \\
        &=
        \sum_{h=1}^k
        (2h-1)(c_h-c_{h-1})
        \sum_{i=h}^k(a_{k+1-i}-a_{k-i})  \\
        &=
        \sum_{h=1}^k
        (2h-1)a_{k+1-h}(c_h-c_{h-1}).
    \end{aligned}    \]
   In the same way, the left-side 
    contribution is
    \[
    \sum_{h=1}^k(2h-1)c_{k+1-h}(a_h-a_{h-1}).
    \]
    Combining the two parts gives
    the claimed formula for \(C'_{\mathrm{PCD}}\).
\end{proof}

Now we consider the odd-case cut excess $F_{\mathrm{odd}}$ defined in \eqref{eq:odd-cut-excess-def}. 
By Lemma~\ref{lem:odd-cut-costs},
\begin{align}
    F_{\mathrm{odd}}
    &=  \sum_{r=1}^k(2r-1)
    \left(
        c_{k+1-r}(a_r-a_{r-1})
        +
        a_{k+1-r}(c_r-c_{r-1})
    \right)                                      
    +
    \sum_{i=1}^k
    \frac{4i}{2k+1}(a_i+c_i)-2\mathrm{OPT}\nonumber \\
    &=    \sum_{r=1}^k(2r-1)
    \left(
        c_{k+1-r}(a_r-a_{r-1})
        +
        a_{k+1-r}(c_r-c_{r-1})
    \right)                                      
    -
    \sum_{i=1}^k
    \frac{2(2k+1-2i)}{2k+1}(a_i+c_i).
    \label{eq:odd-cut-excess}
\end{align}
The following lemma provides a decomposition of $F_{\mathrm{odd}}$ that separates the positive
bilinear part $B$ from a nonnegative linear term $R$, so that $F_{\mathrm{odd}}=B-R$. We call $R$ a \emph{linear budget} because we allocate it across the bilinear cells of $B$ to offset their contributions. Only the excess left after this allocation must be bounded by the crossing savings.

For this goal, it is convenient to reverse the right side of the profile. Define $x_i=a_i,\,\, y_i=c_{k+1-i},\,\, i=1,\ldots,k.$
Then \(0\le x_1\le\cdots\le x_k\le 1/2\), while
\(1/2\ge y_1\ge\cdots\ge y_k\ge 0\). Let $\alpha_i=x_i-x_{i-1},\,\,
    \beta_i=y_i-y_{i+1}$ be the gaps,
where \(x_0=0\) and \(y_{k+1}=0\). Thus \(\alpha_i,\beta_i\ge 0\). Finally set
\(\tau_{ij}=(2(x_i+y_j)-1)_+\). This is the same crossing-saving term as
\(\sigma_{ij}\), after reversing the right-side index. Since the  \(\mathrm{PCD}\) mechanism chooses the left report \(-x_i\) with probability
    \(y_i-y_{i+1}=\beta_i\), and chooses the right report \(y_j\) with
    probability \(x_j-x_{j-1}=\alpha_j\), according to
    Lemma~\ref{lem:cut-saving-identity},   the cut saving of \(\mathrm{PCD}\) is
    \begin{align}
       C_{\mathrm{PCD}}'-C_{\mathrm{PCD}}
        =
        \sum_{i=1}^k\beta_i\sum_{j=1}^k\tau_{ij}
        +
        \sum_{j=1}^k\alpha_j\sum_{i=1}^k\tau_{ij}.
        \label{eq:odd-pcd-saving}
    \end{align}

\begin{lemma}[Odd cut-excess decomposition]\label{lem:odd-increment-form}
   The cut excess $F_{\mathrm{odd}}$ in \eqref{eq:odd-cut-excess} is $F_{\mathrm{odd}}=B-R$, where
    \begin{align*}
        B
        =
        \sum_{1\le r\le s\le k}
        2(k-s+r)\alpha_r\beta_s\quad
       ~\text{~and~}~\quad
        R
        =
        \sum_{r=1}^k\frac{2(k+1-r)^2}{2k+1}\alpha_r
        +
        \sum_{s=1}^k\frac{2s^2}{2k+1}\beta_s.
    \end{align*}
\end{lemma}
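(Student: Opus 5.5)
The plan is a direct change of variables. I will start from the expression \eqref{eq:odd-cut-excess} for $F_{\mathrm{odd}}$. In it I rewrite every coordinate in terms of the reversed coordinates $x_i=a_i$, $y_i=c_{k+1-i}$, and then in terms of the increments $\alpha_r=x_r-x_{r-1}$ and $\beta_s=y_s-y_{s+1}$. The basic substitutions are
\[
x_i=\sum_{r\le i}\alpha_r,\qquad y_j=\sum_{s\ge j}\beta_s,\qquad c_r-c_{r-1}=y_{k+1-r}-y_{k+2-r}=\beta_{k+1-r}.
\]
After substituting, the bilinear sum in \eqref{eq:odd-cut-excess} becomes a quadratic form in $(\alpha,\beta)$ and the linear sum becomes a linear form. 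It then remains only to match coefficients with $B$ and $R$.

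\textbf{Bilinear part.} I treat the two halves of the bilinear sum separately.
\begin{itemize}
\item In the first half, $c_{k+1-r}(a_r-a_{r-1})=y_r\alpha_r=\sum_{s\ge r}\alpha_r\beta_s$. This contributes coefficient $2r-1$ to each product $\alpha_r\beta_s$ with $r\le s$.
\item In the second half, $a_{k+1-r}(c_r-c_{r-1})=x_{k+1-r}\,\beta_{k+1-r}$. Reindexing with $s=k+1-r$ turns the weight $2r-1$ into $2k-2s+1$, so this half equals $\sum_{s}(2k-2s+1)\beta_s\sum_{r\le s}\alpha_r$. It contributes coefficient $2k-2s+1$ to each $\alpha_r\beta_s$ with $r\le s$.
\end{itemize}
No product with $r>s$ appears in either half. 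Adding the two contributions gives $2(k-s+r)$, which is exactly $B$.

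\textbf{Linear part.} The subtracted term is $\sum_i \frac{2(2k+1-2i)}{2k+1}(x_i+c_i)$.
\begin{itemize}
\item Since $x_i=\sum_{r\le i}\alpha_r$, the coefficient of $\alpha_r$ is $\frac{2}{2k+1}\sum_{i=r}^k(2k+1-2i)$. The summands run over the odd numbers $1,3,\dots,2(k-r)+1$, so the sum is $(k+1-r)^2$.
\item Since $c_i=y_{k+1-i}=\sum_{s\ge k+1-i}\beta_s$, the coefficient of $\beta_s$ is $\frac{2}{2k+1}\sum_{i=k+1-s}^{k}(2k+1-2i)$. The summands run over the odd numbers $1,\dots,2s-1$, so the sum is $s^2$.
\end{itemize}
This reproduces $R$, and its nonnegativity is evident from the formula.

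I expect the main obstacle to be index bookkeeping rather than any real difficulty. The places to get right are the reversal $s=k+1-r$ in the second bilinear half and the ranges of the two odd-number sums in the linear part. I will guard against slips by checking the cases $r=s$ and the endpoints $r=1$, $s=k$ explicitly.
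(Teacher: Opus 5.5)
Your proposal is correct and follows essentially the same route as the paper: substitute $y_r=\sum_{s\ge r}\beta_s$ and $x_s=\sum_{r\le s}\alpha_r$, reindex the second bilinear half via $s=k+1-r$ so that the coefficients $(2r-1)+(2k+1-2s)$ combine to $2(k-s+r)$, and evaluate the linear coefficients as sums of consecutive odd numbers. Your explicit derivation of the $\beta_s$ coefficient, via $c_i=\sum_{s\ge k+1-i}\beta_s$ and the sum $1+3+\cdots+(2s-1)=s^2$, spells out the step the paper compresses into ``reversing the order.''
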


\begin{proof}
    We first rewrite the bilinear part in \eqref{eq:odd-cut-excess}. Since
    \(c_{k+1-r}=y_r=\sum_{s=r}^k\beta_s\) and
    \(a_r-a_{r-1}=\alpha_r\), the first half contributes
    \(\sum_{1\le r\le s\le k}(2r-1)\alpha_r\beta_s\). For the second half, substitute
    \(s=k+1-r\). Then
    \(a_{k+1-r}=x_s\) and \(c_r-c_{r-1}=\beta_s\), while
    \(2r-1=2k+1-2s\). Since \(x_s=\sum_{r=1}^s\alpha_r\), the second half
    contributes \(\sum_{1\le r\le s\le k}(2k+1-2s)\alpha_r\beta_s\). Adding the two
    contributions gives the expression $B$.

    We now rewrite the linear term in \eqref{eq:odd-cut-excess}. For the
    \(a\)-coordinates, \(a_i=\sum_{r=1}^i\alpha_r\). Fixing $r$, the coefficient of
    \(\alpha_r\) in the expression is
    \[
        \sum_{i=r}^k \frac{2(2k+1-2i)}{2k+1}
        =
        \frac{2(k+1-r)^2}{2k+1}.
    \]
    For the \(c\)-coordinates and a fixed index $s$, reversing the order gives the coefficient
    \(\frac{2s^2}{2k+1}\) of \(\beta_s\). Therefore, the linear term in \eqref{eq:odd-cut-excess} is equal to $R$. This proves \(F_{\mathrm{odd}}=B-R\).
\end{proof}

The next step is to show that the cut excess $F_{\mathrm{odd}}$ is dominated by the cut saving of \(\mathrm{PCD}\).
The decomposition \(F_{\mathrm{odd}}=B-R\) reduces the task to
showing that the linear budget \(R\), together with the cut saving
\(C_{\mathrm{PCD}}'-C_{\mathrm{PCD}}\), dominates the bilinear term \(B\).

\begin{lemma}[Odd crossing domination]\label{lem:odd-domination}
    For every normalized odd-agent profile,
    \[
        F_{\mathrm{odd}}\le C_{\mathrm{PCD}}'-C_{\mathrm{PCD}}.
    \]
\end{lemma}

\begin{proof}
    Let \(n=2k+1\). We use the decomposition
    \(F_{\mathrm{odd}}=B-R\) from Lemma~\ref{lem:odd-increment-form}.
    The proof compares each cell of \(B\) with an allocated part of the linear
    term \(R\). Whenever this allocation is not sufficient, the remaining error
    is paid for by crossing savings.

    Fix a cell \(1\le r\le s\le k\). Set
    \(X=x_s\), \(Y=y_r\), \(S=X+Y\), and
    \(m=k+1-r+s\). Then \(n-m=k-s+r\), and
    \(m\ge k+1>n/2\). If \(\alpha_r\beta_s=0\), this cell contributes nothing.
    Otherwise, \(X>0\) and \(Y>0\), because
    \(X=x_s\ge \sum_{h=1}^s\alpha_h\ge \alpha_r>0\), and
    \(Y=y_r\ge \sum_{h=s}^k\beta_h\ge \beta_s>0\).
    
    By Cauchy's inequality in the form
    \((u^2/A+v^2/B)(A+B)\ge (u+v)^2\), applied with
    \(u=k+1-r\), \(v=s\), \(A=Y\), and \(B=X\), we have
    \[
        \frac{(k+1-r)^2}{Y}+\frac{s^2}{X}
        \ge
        \frac{m^2}{S}.
    \]
    
    We now explain how this inequality is used. For the cell \((r,s)\), the
    contribution of \(B\) is \(2(n-m)\alpha_r\beta_s\). We allocate to this cell
    the following part of the linear term \(R\):
    \[
        A_{rs}
        =
        \frac{2\alpha_r\beta_s}{n}
        \left(
            \frac{(k+1-r)^2}{Y}
            +
            \frac{s^2}{X}
        \right).
    \]
    By the preceding Cauchy inequality,
    \[
        A_{rs}
        \ge
        \frac{2m^2}{nS}\alpha_r\beta_s .
    \]
    Thus \(A_{rs}\) may or may not already cover the cell contribution
    \(2(n-m)\alpha_r\beta_s\). We define a conservative upper bound on the possible
    uncovered part by
    \[
        E_{rs}
        =
        \left(
            2(n-m)-\frac{2m^2}{nS}
        \right)_+
        \alpha_r\beta_s .
    \]
    Then
    \begin{align}
        2(n-m)\alpha_r\beta_s
        &\le
        A_{rs}+E_{rs} =
        \frac{2\alpha_r\beta_s}{n}
        \left(
            \frac{(k+1-r)^2}{Y}
            +
            \frac{s^2}{X}
        \right)
        +
        E_{rs}.
        \label{eq:odd-cell-linear-allocation}
    \end{align}
    Indeed, if \(2(n-m)\le 2m^2/(nS)\), then \(E_{rs}=0\) and
    \(A_{rs}\ge 2(n-m)\alpha_r\beta_s\). If
    \(2(n-m)>2m^2/(nS)\), then
    \(A_{rs}\ge 2m^2\alpha_r\beta_s/(nS)\), and adding \(E_{rs}\) gives at least
    \(2(n-m)\alpha_r\beta_s\).
    
    Summing the allocated linear terms in
    \eqref{eq:odd-cell-linear-allocation} over all cells gives at most \(R\).
    Indeed, for a fixed \(r\), since \(Y=y_r=\sum_{s=r}^k\beta_s\),
    \[
        \sum_{s=r}^k
        \frac{\beta_s}{y_r}
        =
        1.
    \]
    Thus the first allocated part contributes
    \(\sum_r\frac{2(k+1-r)^2}{n}\alpha_r\) in total. Similarly,
    \(X=x_s=\sum_{r=1}^s\alpha_r\), so the second allocated part contributes
    \(\sum_s\frac{2s^2}{n}\beta_s\). The sum of these two-part contributions is exactly $R$. Therefore
    \begin{align}
        F_{\mathrm{odd}}
        =
        B-R
        \le
        \sum_{1\le r\le s\le k}E_{rs}.
        \label{eq:odd-excess-error-sum}
    \end{align}

    It remains to show that the total error $\sum E_{rs}$ is at most \(C_{\mathrm{PCD}}'-C_{\mathrm{PCD}}\). Let
    \(z_{rs}=\tau_{sr}=(2(x_s+y_r)-1)_+=(2S-1)_+\). We first prove the cellwise bound
    \begin{align}
        E_{rs}
        \le
        \alpha_r\beta_s
        \left(
            \frac{k-s+1}{y_r}
            +
            \frac{r}{x_s}
        \right)
        z_{rs}.
        \label{eq:odd-cell-error-saving}
    \end{align}
    If the coefficient defining \(E_{rs}\) is non-positive, this is immediate.
    Otherwise,
    \(2(n-m)-2m^2/(nS)>0\). Since \(m>n/2\), this implies \(S>1/2\), and hence
    \(z_{rs}=2S-1\). Put \(u=k-s+1\) and \(v=r\). Then
    \(u+v=n+1-m\) and \(n-m=u+v-1\). Another application of Cauchy's
    inequality gives
    \[
        \frac{u}{Y}+\frac{v}{X}
        \ge
        \frac{(\sqrt{u}+\sqrt{v})^2}{S}.
    \]
    Thus it suffices to prove
    \[
        2(n-m)-\frac{2m^2}{nS}
        \le
        \frac{(\sqrt{u}+\sqrt{v})^2}{S}(2S-1).
    \]
    Multiplying by \(S\), the right-hand side minus the left-hand side equals
    \[
        2S-(u+v)+(4S-2)\sqrt{uv}+\frac{2m^2}{n}.
    \]
    Since \(S>1/2\), the middle term is non-negative and \(2S\ge 1\). Therefore
    the above expression is at least
    \[
        1-(u+v)+\frac{2m^2}{n}
        =
        m-n+\frac{2m^2}{n}
        =
        \frac{2m^2+mn-n^2}{n},
    \]
    which is non-negative because \(m\ge n/2\). This proves
    \eqref{eq:odd-cell-error-saving}.

    We now sum up \eqref{eq:odd-cell-error-saving} for all $r,s$ with $1\le r\le s\le k$. For fixed \(r\) and \(s\),
    monotonicity of the \(x\)-sequence implies
    \(\tau_{ir}\ge \tau_{sr}=z_{rs}\) for every \(i\ge s\). Hence
    \[
        (k-s+1)z_{rs}
        \le
        \sum_{i=1}^k\tau_{ir}.
    \]
    Using again \(\sum_{s=r}^k\beta_s/y_r=1\), we get, for every fixed \(r\),
    \[
        \sum_{s=r}^k
        \alpha_r\beta_s
        \frac{k-s+1}{y_r}
        z_{rs}
        \le
        \alpha_r\sum_{i=1}^k\tau_{ir}.
    \]
    Similarly, for fixed \(s\), monotonicity of the \(y\)-sequence implies
    \(\tau_{sj}\ge \tau_{sr}=z_{rs}\) for every \(j\le r\), and therefore
    \(r z_{rs}\le \sum_{j=1}^k\tau_{sj}\). Since
    \(\sum_{r=1}^s\alpha_r/x_s=1\), we obtain
    \[
        \sum_{r=1}^s
        \alpha_r\beta_s
        \frac{r}{x_s}
        z_{rs}
        \le
        \beta_s\sum_{j=1}^k\tau_{sj}.
    \]
    Summing the last two inequalities over all \(r\) and \(s\), and using
    \eqref{eq:odd-pcd-saving}, gives
    \[
        \sum_{1\le r\le s\le k}E_{rs}
        \le
        C_{\mathrm{PCD}}'-C_{\mathrm{PCD}}.
    \]
    Together with \eqref{eq:odd-excess-error-sum}, this proves the lemma.
\end{proof}

Now we are ready to derive the $\frac32$-approximation for any odd number of agents.

\begin{proof}[Proof of Theorem~\ref{thm:main-upper} in the odd case] Consider any normalized profile $\mathbf b$.
    By definition of \(F_{\mathrm{odd}}\),
    \(C'_{\mathrm{RD}}+C'_{\mathrm{PCD}}=3\mathrm{OPT}+F_{\mathrm{odd}}\). 
    Hence
    \[
        C_{\mathrm{RD}}+C_{\mathrm{PCD}}-3\mathrm{OPT}
        =
        F_{\mathrm{odd}}
        -
        (C'_{\mathrm{RD}}-C_{\mathrm{RD}})
        -
        (C_{\mathrm{PCD}}'-C_{\mathrm{PCD}})
        \le
        F_{\mathrm{odd}}-(C_{\mathrm{PCD}}'-C_{\mathrm{PCD}}),
    \]
    where the inequality is due to the facts that the cut saving of any mechanism is non-negative and thus $C'_{\mathrm{RD}}-C_{\mathrm{RD}}\ge 0$.
   Since Lemma~\ref{lem:odd-domination} shows that $F_{\mathrm{odd}}\le C_{\mathrm{PCD}}'-C_{\mathrm{PCD}}$, we obtain \[C_{\mathrm{RD}}+C_{\mathrm{PCD}}\le 3\mathrm{OPT}.\] The social cost of our mechanism
    \(\frac12 \mathrm{RD}+\frac12 \mathrm{PCD}\) is
    \(\frac12(C_{\mathrm{RD}}+C_{\mathrm{PCD}})\) in expectation, which is at most \(\frac32\mathrm{OPT}\). This gives the claimed approximation ratio. 
\end{proof}

\subsection{Even Number of Agents}\label{subsec:even}

 Let \(n=2k\ge 4\), and
consider a normalized profile $\mathbf b$ with \(b_{-i}=-a_i,\,\, b_i=c_i,\,\, \forall i\in[k]\),
where \(0\le a_1\le\cdots\le a_k\le 1/2\) and
\(0\le c_1\le\cdots\le c_k\le 1/2\). By Lemma~\ref{lem:normalization}, the optimal (circle-metric) social cost is
\(\mathrm{OPT}=\sum_{i=1}^k(a_i+c_i)\).

We use the same notations $x_i=a_i,\,\,y_i=c_{k+1-i},\,\,\alpha_i=x_i-x_{i-1},\,\,\beta_i=y_i-y_{i+1}, \,\,\text{and}\,\, \tau_{ij}=(2(x_i+y_j)-1)_+$ as in Section~\ref{subsec:odd}. The ordered reported copies in clockwise order are $-x_k,\ldots,-x_1,y_k,\ldots,y_1$. We further define \(h=1-x_k-y_1\) as the wrap-around gap, which is non-negative.

As in the odd-agent case, our goal is again to prove
\[
    C_{\mathrm{RD}}+C_{\mathrm{EPCD}}\le 3\mathrm{OPT}.
\]
The proof follows in a similar way. We first pass to the cut metric and define the  cut excess as
\begin{equation}\label{eq:fio}
F_{\mathrm{even}}=C'_{\mathrm{RD}}+C'_{\mathrm{EPCD}}-3\mathrm{OPT}.
\end{equation}
We then rewrite this excess in increment form and show that it is dominated by the crossing saving of \(\mathrm{EPCD}\), namely \(C'_{\mathrm{EPCD}}-C_{\mathrm{EPCD}}\). Since the cut saving of \(\mathrm{RD}\) is nonnegative, this domination implies \(C_{\mathrm{RD}}+C_{\mathrm{EPCD}}\le 3\mathrm{OPT}\) and thus a $\frac32$-approximation. 
However, the even case is more involved because \(\mathrm{EPCD}\) assigns to each report the average of two adjacent opposing gaps, rather than a single opposing gap as in \(\mathrm{PCD}\). This averaging creates additional boundary terms near the wrap-around gap and splits each interior contribution across two neighboring cells. We handle these terms by first separating the harmless boundary part and then charging the remaining interior excess to the crossing savings of \(\mathrm{EPCD}\).



The formula \eqref{eq:epqq} for the \(\mathrm{EPCD}\) mechanism gives the following
probabilities in the normalized coordinates. Let \(p_i^-\) be the probability
of choosing the left report \(-x_i\), and \(q_j\) the probability of
choosing the right report \(y_j\). Then
\[
    p_1^-=\frac12(h+\beta_1),\qquad
    p_i^-=\frac12(\beta_{i-1}+\beta_i),\quad
     2\le i\le k-1,\qquad
    p_k^-=\frac12(\beta_{k-1}+\beta_k+\alpha_1),
\]
\[
    q_1=\frac12(\alpha_1+\beta_k+\alpha_2),\qquad
    q_j=\frac12(\alpha_j+\alpha_{j+1}),\quad
   2\le j\le k-1,\qquad
    q_k=\frac12(\alpha_k+h).
\]
Here the endpoint formulas simply reflect the two gaps adjacent to the cut point and
the gap containing the origin. Applying Lemma~\ref{lem:cut-saving-identity}, the
cut saving of \(\mathrm{EPCD}\) is
\begin{align}
   C'_{\mathrm{EPCD}}-C_{\mathrm{EPCD}}
    =
    \sum_{i=1}^k p_i^-\sum_{j=1}^k\tau_{ij}
    +
    \sum_{j=1}^k q_j\sum_{i=1}^k\tau_{ij}.
    \label{eq:even-epcd-saving}
\end{align}


\begin{lemma}[Even cut-excess decomposition]\label{lem:even-increment-form}
 The even-case cut excess $F_{\mathrm{even}}$ in \eqref{eq:fio} is
    \[
        F_{\mathrm{even}}=Q-L,\quad\text{where}
    \]
    \begin{equation*}
    \begin{gathered} 
        L
        =
        \sum_{i=1}^k\frac{(k+1-i)^2}{k}\alpha_i
        +
        \sum_{j=1}^k\frac{j^2}{k}\beta_j,\qquad\text{and} \\
        Q\!
        =\!
        \sum_{j=1}^{k-1}(k-j)(\alpha_1\beta_j+\beta_j\beta_k)
        \!+\!
        \sum_{i=2}^k(i\!-\!1)(\alpha_i\beta_k\!+\!\alpha_1\alpha_i)      +
        \sum_{i=2}^k k\alpha_i\beta_{i-1}
        +\!\!
        \sum_{2\le i\le j\le k-1}\!\!\!2(k-j+i-1)\alpha_i\beta_j.
        \end{gathered} 
    \end{equation*}
\end{lemma}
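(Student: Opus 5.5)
The plan mirrors Lemma~\ref{lem:odd-cut-costs} and Lemma~\ref{lem:odd-increment-form}: compute $C'_{\mathrm{RD}}$ and $C'_{\mathrm{EPCD}}$ explicitly in the coordinates $a_i,c_i$, then substitute $x_i=a_i=\sum_{r\le i}\alpha_r$ and $y_j=c_{k+1-j}=\sum_{s\ge j}\beta_s$ and collect coefficients.

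\textbf{Step 1: RD.} With $2k$ agents and no agent at $0$, let $S'(y)$ be the cut-metric social cost. Moving the facility from $c_{i-1}$ to $c_i$ (with $c_0=0$) puts $k+i-1$ agents weakly left and $k-i+1$ strictly right. Hence the increment is $(2i-2)(c_i-c_{i-1})$, and
\[
S'(c_i)-\mathrm{OPT}=\sum_{h=1}^{i}(2h-2)(c_h-c_{h-1}),
\]
with the symmetric formula on the left. Averaging over the $2k$ dictators and collecting the coefficient of $a_h+c_h$ as in the odd case gives $\frac{1}{2k}\bigl((2h-2)-2(k-h)\bigr)$. Therefore
\[
C'_{\mathrm{RD}}=\mathrm{OPT}+\sum_h\frac{2h-k-1}{k}(a_h+c_h)=\sum_h\frac{2h-1}{k}(a_h+c_h).
\]
In $\alpha,\beta$ form, the coefficient of $\alpha_r$ in $2\mathrm{OPT}-C'_{\mathrm{RD}}$ is $\sum_{i=r}^k\frac{2k-2i+1}{k}=\frac{(k+1-r)^2}{k}$. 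Symmetrically, the coefficient of $\beta_s$ is $\frac{s^2}{k}$. This is exactly $L$, so $F_{\mathrm{even}}=(C'_{\mathrm{EPCD}}-\mathrm{OPT})-L$.

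\textbf{Step 2: EPCD.} It remains to show
\[
C'_{\mathrm{EPCD}}-\mathrm{OPT}=\sum_i p_i^-\bigl(S'(-x_i)-\mathrm{OPT}\bigr)+\sum_j q_j\bigl(S'(y_j)-\mathrm{OPT}\bigr)=Q,
\]
using $\sum_i p_i^-+\sum_j q_j=1$. Write $S'(-x_i)-\mathrm{OPT}=\sum_{r\le i}2(r-1)\alpha_r$. In reversed coordinates, $S'(y_j)-\mathrm{OPT}=\sum_{s\ge j}2(k-s)\beta_s$, since the right index $k+1-s$ has increment coefficient $2(k+1-s)-2$.

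Next insert the probabilities. The interior ones are $p_i^-=\frac12(\beta_{i-1}+\beta_i)$ and $q_j=\frac12(\alpha_j+\alpha_{j+1})$. The boundary ones add $h$ to $p_1^-$ and $q_k$, add $\alpha_1$ to $p_k^-$, and add $\beta_k$ to $q_1$. Every $h$-term vanishes, because $S'(-x_1)-\mathrm{OPT}=0\cdot\alpha_1=0$ and $S'(y_k)-\mathrm{OPT}=0\cdot\beta_k=0$. This is why $h$ does not appear in $Q$.

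Then expand everything as a quadratic form in $\alpha,\beta$:
\begin{itemize}
\item The extra $\alpha_1$ in $p_k^-$ multiplies $\sum_r 2(r-1)\alpha_r$ with weight $\frac12$, producing $\sum_{i\ge2}(i-1)\alpha_1\alpha_i$.
\item The extra $\beta_k$ in $q_1$ produces $\sum_{j\le k-1}(k-j)\beta_j\beta_k$.
\item The interior averaging splits each product $\beta_s\alpha_r$ (for $r\le i$ with $i\in\{s,s+1\}$) and $\alpha_r\beta_s$ (for $s\ge j$ with $j\in\{r-1,r\}$) across two neighboring cells.
\end{itemize}
Collecting the coefficient of $\alpha_i\beta_j$ should give:
\begin{itemize}
\item for $2\le i\le j\le k-1$, the sum $(i-1)+(i-1)+(k-j)+(k-j)=2(k-j+i-1)$;
\item on the subdiagonal $j=i-1$, only the $q_{i-1}$ half and the $p_i^-$ half contribute, giving $(i-1)+(k-i+1)=k$;
\item in the boundary rows $i=1$ and $j=k$, where one factor has zero weight, the coefficients $(k-j)$ and $(i-1)$ claimed in $Q$.
\end{itemize}

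The main obstacle is this bookkeeping at the boundaries: indices $i=1$, $j=k$, and the corner terms where $p_k^-$ and $q_1$ absorb the gap straddling the origin (the gap $\alpha_1+\beta_k$ between $-x_1$ and $y_k$). I would verify the corner coefficients carefully, for instance that $\alpha_1\beta_k$ has coefficient $0$ in $Q$. Then I would sanity-check the final identity numerically on $k=2$ before finalizing.
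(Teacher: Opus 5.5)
Your proposal is correct and follows essentially the same route as the paper. You use the RD increment $2(i-1)(c_i-c_{i-1})$ to identify $L=2\mathrm{OPT}-C'_{\mathrm{RD}}$, then expand $C'_{\mathrm{EPCD}}-\mathrm{OPT}$ against the per-report excesses; your cell-by-cell coefficients all check out (interior $2(k-j+i-1)$, subdiagonal $k$, boundary rows, zero coefficient for $\alpha_1\beta_k$, vanishing $h$-terms), whereas the paper organizes the same expansion by swapping sums through the tail probabilities $\sum_{i\ge r}p_i^-$ and prefix probabilities $\sum_{j\le s}q_j$.
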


\begin{proof}
    We first compute the cut-metric social cost of \(\mathrm{RD}\), namely, $C_{\mathrm{RD}}'$. Most computations are the same as in the odd case in the proof of Lemma~\ref{lem:odd-cut-costs}. Let \(S'(y)\) be the social
    cost of location \(y\) under the cut metric. Since the cut metric is the line
    metric on the representatives in \([-\frac12,\frac12]\), we have \(S'(0)=\mathrm{OPT}\).
    
    Fix \(i\in\{1,\ldots,k\}\), and let \(\Delta_i=c_i-c_{i-1}\), where \(c_0=0\).
    When the facility moves from \(c_{i-1}\) to \(c_i\), the distance to the
    \(k+i-1\) agents weakly to the left of \(c_{i-1}\) increases by \(\Delta_i\),
    while the distance to the \(k-i+1\) agents weakly to the right of \(c_i\)
    decreases by \(\Delta_i\). Therefore
    \[
        S'(c_i)-S'(c_{i-1})
        =
        (k+i-1)\Delta_i-(k-i+1)\Delta_i
        =
        2(i-1)(c_i-c_{i-1}).
    \]
    Summing these increments from \(1\) to \(i\), we get
    \[
        S'(c_i)-\mathrm{OPT}
        =
        \sum_{h=1}^i 2(h-1)(c_h-c_{h-1})
        =
        2(i-1)c_i-2\sum_{h=1}^{i-1}c_h.
    \]
    The same argument on the left side gives
    \[
        S'(-a_i)-\mathrm{OPT}
        =
        \sum_{h=1}^i 2(h-1)(a_h-a_{h-1})
        =
        2(i-1)a_i-2\sum_{h=1}^{i-1}a_h.
    \]
    
    Now average over the \(2k\) possible dictators. We have
    \[
    \begin{aligned}
        C'_{\mathrm{RD}}
        &=
        \mathrm{OPT}+
        \frac1{2k}\sum_{i=1}^k
        \bigl(S'(-a_i)-\mathrm{OPT}+S'(c_i)-\mathrm{OPT}\bigr)  \\
        &=
        \mathrm{OPT}+
        \frac1{2k}\sum_{i=1}^k
        \left(
            2(i-1)(a_i+c_i)
            -
            2\sum_{h=1}^{i-1}(a_h+c_h)
        \right).
    \end{aligned}
    \]
    We rewrite the double sum by collecting the coefficient of each
    \(a_h+c_h\). The term \(a_h+c_h\) appears in
    \(\sum_{i=1}^k\sum_{h=1}^{i-1}(a_h+c_h)\) exactly for
    \(i=h+1,\ldots,k\), hence \(k-h\) times. Therefore
    \[
    \begin{aligned}
        C'_{\mathrm{RD}}
        &=
        \mathrm{OPT}+
        \frac1{2k}\sum_{h=1}^k
        \bigl(2(h-1)-2(k-h)\bigr)(a_h+c_h)  =
        \mathrm{OPT}+
        \sum_{h=1}^k
        \frac{2h-k-1}{k}(a_h+c_h)  \\
        &=
        \sum_{h=1}^k
        \frac{2h-1}{k}(a_h+c_h).
    \end{aligned}
    \]
    Thus
    \[
        C'_{\mathrm{RD}}-\mathrm{OPT}
        =
        \sum_{h=1}^k
        \frac{2h-k-1}{k}(a_h+c_h).
    \]

    We now expand \(C'_{\mathrm{EPCD}}\). It is useful to first record the cut-metric
    social costs relative to \(\mathrm{OPT}\). From the increment computation above,
    \[
        S'(-x_i)-\mathrm{OPT}
        =
        \sum_{r=1}^i 2(r-1)\alpha_r,
    \]
    and
    \[
        S'(y_j)-\mathrm{OPT}
        =
        \sum_{s=j}^k 2(k-s)\beta_s.
    \]
    The second formula follows because \(y_j\) is the \((k+1-j)\)-th right-side
    report in the original increasing order.
    
    Since \(\mathrm{EPCD}\) chooses a left report \(-x_i\) with probability \(p_i^-\) and a
    right report \(y_j\) with probability \(q_j\), and since the probabilities sum
    to one, we have
    \[
        C'_{\mathrm{EPCD}}-\mathrm{OPT}
        =
        \sum_{i=1}^k p_i^-\bigl(S'(-x_i)-\mathrm{OPT}\bigr)
        +
        \sum_{j=1}^k q_j\bigl(S'(y_j)-\mathrm{OPT}\bigr).
    \]
 We next expand the right-hand side and verify that it is exactly the
bilinear term \(Q\).   We compute the two terms separately.
    
    For the left reports, the coefficient of \(\alpha_r\) in \(S'(-x_i)-\mathrm{OPT}\) appears
    exactly when \(i\ge r\). Since the coefficient for \(r=1\) is zero, we get
    \[
    \begin{aligned}
        \sum_{i=1}^k p_i^-\bigl(S'(-x_i)-\mathrm{OPT}\bigr)
        &=
        \sum_{i=2}^k p_i^-\sum_{r=1}^i 2(r-1)\alpha_r
        =
        \sum_{r=2}^k 2(r-1)\alpha_r\sum_{i=r}^k p_i^- .
    \end{aligned}
    \]
    For \(r=2,\ldots,k\), using the displayed formulas for the probabilities
    \(p_i^-\), the tail probability is
    \[
        \sum_{i=r}^k p_i^-
        =
        \sum_{s=r}^{k-1}\beta_s
        +
        \frac12(\beta_{r-1}+\beta_k+\alpha_1).
    \]
    Therefore
    \[
    \begin{aligned}
        \sum_{i=1}^k p_i^-\bigl(S'(-x_i)-\mathrm{OPT}\bigr)
        &=
        \sum_{r=2}^k
        2(r-1)\alpha_r\sum_{s=r}^{k-1}\beta_s  +
        \sum_{r=2}^k
        (r-1)\alpha_r(\beta_{r-1}+\beta_k+\alpha_1) \\
        &=
        \sum_{2\le r\le s\le k-1}
        2(r-1)\alpha_r\beta_s  +
        \sum_{r=2}^k
        (r-1)\alpha_r\beta_{r-1}
        +
        \sum_{r=2}^k
        (r-1)\alpha_r\beta_k
        +
        \sum_{r=2}^k
        (r-1)\alpha_1\alpha_r .
    \end{aligned}
    \]
    
    For the right reports, the coefficient of \(\beta_s\) in \(S'(y_j)-\mathrm{OPT}\) appears
    exactly when \(j\le s\). Since the coefficient for \(s=k\) is zero, we get
    \[
    \begin{aligned}
        \sum_{j=1}^k q_j\bigl(S'(y_j)-\mathrm{OPT}\bigr)
        &=
        \sum_{s=1}^{k-1}2(k-s)\beta_s\sum_{j=1}^s q_j .
    \end{aligned}
    \]
    For \(s=1,\ldots,k-1\), using the displayed formulas for the probabilities
    \(q_j\), the prefix probability is
    \[
        \sum_{j=1}^s q_j
        =
        \sum_{r=2}^{s}\alpha_r
        +
        \frac12(\alpha_1+\beta_k+\alpha_{s+1}).
    \]
    Therefore
    \[
    \begin{aligned}
        \sum_{j=1}^k q_j\bigl(S'(y_j)-\mathrm{OPT}\bigr)
        &=
        \sum_{s=1}^{k-1}
        2(k-s)\beta_s\sum_{r=2}^{s}\alpha_r  +
        \sum_{s=1}^{k-1}
        (k-s)\beta_s(\alpha_1+\beta_k+\alpha_{s+1}) \\
        &=
        \sum_{2\le r\le s\le k-1}
        2(k-s)\alpha_r\beta_s +
        \sum_{s=1}^{k-1}
        (k-s)\alpha_1\beta_s
        +
        \sum_{s=1}^{k-1}
        (k-s)\beta_s\beta_k
        +
        \sum_{s=1}^{k-1}
        (k-s)\alpha_{s+1}\beta_s .
    \end{aligned}
    \]
    
    Adding the left and right contributions gives \(C'_{\mathrm{EPCD}}-\mathrm{OPT}=Q\). Indeed, the
    terms involving \(\alpha_1\) and \(\beta_k\) give the two boundary strips
    \[
        \sum_{s=1}^{k-1}
        (k-s)\alpha_1\beta_s
        +
        \sum_{s=1}^{k-1}
        (k-s)\beta_s\beta_k=\sum_{j=1}^{k-1}(k-j)(\alpha_1\beta_j+\beta_j\beta_k)
    \]
    and
    \[
        \sum_{r=2}^k
        (r-1)\alpha_r\beta_k
        +
        \sum_{r=2}^k
        (r-1)\alpha_1\alpha_r=\sum_{i=2}^k(i-1)(\alpha_i\beta_k+\alpha_1\alpha_i).
    \]
    Also we have
    \[
        \begin{aligned}
            \sum_{i=2}^k (i-1)\alpha_i\beta_{i-1} + \sum_{s=1}^{k-1}(k-s)\alpha_{s+1}\beta_s = \sum_{i=2}^k (i-1)\alpha_i\beta_{i-1} + \sum_{i=2}^k(k-i+1)\alpha_i\beta_{i-1} = \sum_{i=2}^k k\alpha_i\beta_{i-1}.
        \end{aligned}
    \]
    Finally, for every interior pair \(2\le r\le s\le k-1\), the coefficient of
    \(\alpha_r\beta_s\) is
    \[
        2(r-1)+2(k-s)=2(k-s+r-1),
    \]
    which gives the last term of \(Q\). Hence \(C'_{\mathrm{EPCD}}-\mathrm{OPT}=Q\), with \(Q\) given in the claim of the lemma.
    
    It remains to identify the linear term \(L\). From the computation of \(\mathrm{RD}\)
    above, we have
    \[
        C'_{\mathrm{RD}}
        =
        \sum_{i=1}^k\frac{2i-1}{k}(a_i+c_i).
    \]
    Hence
    \[
        C'_{\mathrm{RD}}-\mathrm{OPT}
        =
        \sum_{i=1}^k\frac{2i-k-1}{k}a_i
        +
        \sum_{i=1}^k\frac{2i-k-1}{k}c_i.
    \]
    
    We first rewrite the left-side part in terms of the increments
    \(\alpha_r=x_r-x_{r-1}=a_r-a_{r-1}\). Since
    \(a_i=\sum_{r=1}^i\alpha_r\), we have
    \[
    \begin{aligned}
        \sum_{i=1}^k\frac{2i-k-1}{k}a_i
        &=
        \sum_{i=1}^k\frac{2i-k-1}{k}
        \sum_{r=1}^i\alpha_r =
        \sum_{r=1}^k
        \alpha_r
        \sum_{i=r}^k\frac{2i-k-1}{k}  =
        \sum_{r=1}^k
        \frac{(r-1)(k+1-r)}{k}\alpha_r .
    \end{aligned}
    \]
    
    We now rewrite the right-side part. Recall that \(y_j=c_{k+1-j}\), or
    equivalently \(c_i=y_{k+1-i}\). Therefore, changing variables from
    \(i\) to \(j=k+1-i\), we obtain
    \[
    \begin{aligned}
        \sum_{i=1}^k\frac{2i-k-1}{k}c_i
        &=
        \sum_{j=1}^k
        \frac{2(k+1-j)-k-1}{k}y_j  =
        \sum_{j=1}^k
        \frac{k+1-2j}{k}y_j .
    \end{aligned}
    \]
    Since \(\beta_s=y_s-y_{s+1}\) and \(y_{k+1}=0\), each \(y_j\) can be written as
    \[
        y_j=\sum_{s=j}^k\beta_s.
    \]
    Thus
    \[
    \begin{aligned}
        \sum_{j=1}^k
        \frac{k+1-2j}{k}y_j
        &=
        \sum_{j=1}^k
        \frac{k+1-2j}{k}
        \sum_{s=j}^k\beta_s =
        \sum_{s=1}^k
        \beta_s
        \sum_{j=1}^s\frac{k+1-2j}{k} =
        \sum_{s=1}^k
        \frac{s(k-s)}{k}\beta_s .
    \end{aligned}
    \]
    Combining the two sides, we get
    \[
        C'_{\mathrm{RD}}-\mathrm{OPT}
        =
        \sum_{i=1}^k
        \frac{(i-1)(k+1-i)}{k}\alpha_i
        +
        \sum_{j=1}^k
        \frac{j(k-j)}{k}\beta_j.
    \]
    
    On the other hand,
    \[
        \mathrm{OPT}
        =
        \sum_{i=1}^k a_i+\sum_{j=1}^k y_j.
    \]
    Using \(a_i=\sum_{r=1}^i\alpha_r\), the coefficient of \(\alpha_i\) in
    \(\sum_{r=1}^k a_r\) is \(k+1-i\). Using
    \(y_j=\sum_{s=j}^k\beta_s\), the coefficient of \(\beta_j\) in
    \(\sum_{r=1}^k y_r\) is \(j\). Hence
    \[
        \mathrm{OPT}
        =
        \sum_{i=1}^k(k+1-i)\alpha_i
        +
        \sum_{j=1}^k j\beta_j.
    \]
    Therefore
    \[
    \begin{aligned}
        \mathrm{OPT}-(C'_{\mathrm{RD}}-\mathrm{OPT})
        &=
        \sum_{i=1}^k
        \left(
            k+1-i-\frac{(i-1)(k+1-i)}{k}
        \right)\alpha_i +
        \sum_{j=1}^k
        \left(
            j-\frac{j(k-j)}{k}
        \right)\beta_j \\
        &=
        \sum_{i=1}^k\frac{(k+1-i)^2}{k}\alpha_i
        +
        \sum_{j=1}^k\frac{j^2}{k}\beta_j
        =
        L.
    \end{aligned}
    \]
    Combining \(C'_{\mathrm{EPCD}}-\mathrm{OPT}=Q\) and \(\mathrm{OPT}-(C'_{\mathrm{RD}}-\mathrm{OPT})=L\), we obtain
    \[
        F_{\mathrm{even}}
        =
        C'_{\mathrm{RD}}+C'_{\mathrm{EPCD}}-3\mathrm{OPT}
        =
        Q-L.
    \]
\end{proof}

We now separate from \(Q-L\) the boundary terms created by the gap containing
the origin. 
The next lemma shows that these boundary terms are non-positive. 
Thus the cut excess $F_{\mathrm{even}}$ is controlled by only the
interior part.

\begin{lemma}[Boundary reduction]\label{lem:even-boundary-reduction}
   The even-case cut excess $F_{\mathrm{even}}$ satisfies
    \[
        F_{\mathrm{even}}\le Q_{\mathrm{int}}-L_{\mathrm{int}}, \quad\text{where}
    \]
    \begin{equation*}
            Q_{\mathrm{int}}
        =
        \sum_{r=1}^{k-1}k\alpha_{r+1}\beta_r
        +
        \sum_{1\le r<s\le k-1}2(k-s+r)\alpha_{r+1}\beta_s,\quad
        L_{\mathrm{int}}
        =
        \sum_{r=1}^{k-1}\frac{(k-r)^2}{k}\alpha_{r+1}
        +
        \sum_{s=1}^{k-1}\frac{s^2}{k}\beta_s
    \end{equation*}
    denote the terms in $Q$ and $L$ that are independent of boundary increments $\alpha_1$ and $\beta_k$.
\end{lemma}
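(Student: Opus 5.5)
The plan is to isolate every term of the decomposition $F_{\mathrm{even}}=Q-L$ from Lemma~\ref{lem:even-increment-form} that involves a boundary increment $\alpha_1$ or $\beta_k$. I will then show that the boundary part of $Q$ is dominated by the boundary part of $L$, using only the feasibility constraint $x_k+y_1\le 1$, which is equivalent to $h\ge 0$.

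\emph{Step 1 (bookkeeping).} I first check that the non-boundary terms of $Q$ and $L$ are exactly $Q_{\mathrm{int}}$ and $L_{\mathrm{int}}$.

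For $L$ this is immediate. Drop the $i=1$ and $j=k$ terms, and reindex $i=r+1$, so that $(k+1-i)^2/k=(k-r)^2/k$.

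For $Q$, the strip $\sum_{i=2}^k k\alpha_i\beta_{i-1}$ never contains $\alpha_1$ or $\beta_k$, since $i\ge 2$ and $i-1\le k-1$. Reindexed with $r=i-1$, it becomes $\sum_{r=1}^{k-1}k\alpha_{r+1}\beta_r$. The interior sum over $2\le i\le j\le k-1$ becomes, after substituting $r=i-1$, the sum over $1\le r<s\le k-1$ with coefficient $2(k-s+r)$. Hence both pieces belong to $Q_{\mathrm{int}}$.

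The remaining terms of $Q$ are the two boundary strips. They factor as
\[
Q-Q_{\mathrm{int}}=(\alpha_1+\beta_k)\,T,\qquad T=\sum_{j=1}^{k-1}(k-j)\beta_j+\sum_{i=2}^k(i-1)\alpha_i .
\]
The remaining part of $L$ is $L-L_{\mathrm{int}}=k\alpha_1+k\beta_k$, since the $i=1$ coefficient is $k^2/k$ and the $j=k$ coefficient is $k^2/k$.

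\emph{Step 2 (the key inequality).} It therefore suffices to prove
\[
(\alpha_1+\beta_k)(T-k)\le 0 .
\]
Since $\alpha_1,\beta_k\ge 0$, this reduces to showing $T\le k$. I bound each coefficient of $T$ by $k-1$:
\[
T\le (k-1)\Bigl(\sum_{i=2}^k\alpha_i+\sum_{j=1}^{k-1}\beta_j\Bigr)=(k-1)\bigl((x_k-\alpha_1)+(y_1-\beta_k)\bigr)\le (k-1)(x_k+y_1).
\]
Because the wrap-around gap $h=1-x_k-y_1$ is non-negative, $x_k+y_1\le 1$, and so $T\le k-1<k$.

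Combining the two steps gives
\[
F_{\mathrm{even}}=(Q_{\mathrm{int}}-L_{\mathrm{int}})+(\alpha_1+\beta_k)(T-k)\le Q_{\mathrm{int}}-L_{\mathrm{int}}.
\]

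The only delicate point is Step 1. I must verify that the index shifts between $Q$ as stated in Lemma~\ref{lem:even-increment-form} and $Q_{\mathrm{int}}$ are consistent, and that no mixed term $\alpha_1\beta_k$ appears. It does not, because the strips run over $j\le k-1$ and $i\ge 2$.

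Once the factorization $(\alpha_1+\beta_k)T$ is in hand, the inequality $T\le k$ is a direct consequence of $h\ge0$, with a slack of one unit.
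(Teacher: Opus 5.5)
Your proof is correct and follows the paper's argument: you use the same split of $Q$ and $L$ into $\Gamma T$, $k\Gamma$ and the interior parts, and you reduce to $T\le k-1$ by bounding every coefficient of $T$ by $k-1$. The only cosmetic difference is that you finish with $x_k+y_1\le 1$ (i.e.\ $h\ge 0$), whereas the paper uses $x_k\le\frac12$ and $y_1\le\frac12$ separately; both give the same bound.
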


\begin{proof}
    We start from the decomposition \(F_{\mathrm{even}}=Q-L\) in
    Lemma~\ref{lem:even-increment-form}. The terms of \(Q\) involving
    \(\alpha_1\) or \(\beta_k\) are
    \[
        \sum_{j=1}^{k-1}(k-j)(\alpha_1\beta_j+\beta_j\beta_k)
        +
        \sum_{i=2}^k(i-1)(\alpha_i\beta_k+\alpha_1\alpha_i).
    \]
   Let
\[
    \Gamma=\alpha_1+\beta_k.
\] These terms can be written as
    \[
        Q_{\partial}
        =
        \Gamma
        \left(
            \sum_{j=1}^{k-1}(k-j)\beta_j
            +
            \sum_{i=2}^k(i-1)\alpha_i
        \right).
    \]
    The remaining terms of \(Q\) are
    \[
        \sum_{i=2}^k k\alpha_i\beta_{i-1}
        +
        \sum_{2\le i\le j\le k-1}2(k-j+i-1)\alpha_i\beta_j .
    \]
    Reindexing the first coordinate by \(r=i-1\), this becomes exactly
    \[
        Q_{\mathrm{int}}
        =
        \sum_{r=1}^{k-1}k\alpha_{r+1}\beta_r
        +
        \sum_{1\le r<s\le k-1}2(k-s+r)\alpha_{r+1}\beta_s .
    \]
    Hence
    \[
        Q=Q_{\partial}+Q_{\mathrm{int}}.
    \]

    Next consider the linear term
    \[
        L
        =
        \sum_{i=1}^k\frac{(k+1-i)^2}{k}\alpha_i
        +
        \sum_{j=1}^k\frac{j^2}{k}\beta_j.
    \]
    The coefficient of \(\alpha_1\) is \(k\), and the coefficient of
    \(\beta_k\) is also \(k\). Therefore the boundary contribution to \(L\) is
    \(k(\alpha_1+\beta_k)=k\Gamma\). The remaining contribution is
    \[
        L_{\mathrm{int}}
        =
        \sum_{r=1}^{k-1}\frac{(k-r)^2}{k}\alpha_{r+1}
        +
        \sum_{s=1}^{k-1}\frac{s^2}{k}\beta_s .
    \]
    Thus
    \[
        L=k\Gamma+L_{\mathrm{int}}.
    \]

    Combining the two decompositions gives
    \[
        F_{\mathrm{even}}
        =
        Q-L
        =
        (Q_{\partial}-k\Gamma)
        +
        (Q_{\mathrm{int}}-L_{\mathrm{int}}).
    \]
    It remains to show that \(Q_{\partial}-k\Gamma\le 0\). Since
    \[
        \sum_{j=1}^{k-1}\beta_j
        =
        y_1-y_k
        \le y_1
        \le \frac12
    \]
    and
    \[
        \sum_{i=2}^k\alpha_i
        =
        x_k-x_1
        \le x_k
        \le \frac12,
    \]
    we have
    \[
        \sum_{j=1}^{k-1}(k-j)\beta_j
        +
        \sum_{i=2}^k(i-1)\alpha_i
        \le
        (k-1)\sum_{j=1}^{k-1}\beta_j
        +
        (k-1)\sum_{i=2}^k\alpha_i \le k-1
        < k .
    \]
    Multiplying by \(\Gamma\ge 0\), we obtain \(Q_{\partial}\le k\Gamma\).
    Therefore
    \[
        F_{\mathrm{even}}
        \le
        Q_{\mathrm{int}}-L_{\mathrm{int}},
    \]
    as claimed.
\end{proof}

The next step is to show that the cut excess $F_{\mathrm{even}}$  is dominated by the cut saving of $\mathrm{EPCD}$.

\begin{lemma}[Even crossing domination]\label{lem:even-domination}
    For every normalized even-agent profile,
    \[
        F_{\mathrm{even}}\le Q_{\mathrm{int}}-L_{\mathrm{int}}\le C'_{\mathrm{EPCD}}-C_{\mathrm{EPCD}}.
    \]
\end{lemma}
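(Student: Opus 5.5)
The first inequality is exactly Lemma~\ref{lem:even-boundary-reduction}, so we only need to show $Q_{\mathrm{int}}-L_{\mathrm{int}}\le C'_{\mathrm{EPCD}}-C_{\mathrm{EPCD}}$. We will mimic the cell-allocation argument of Lemma~\ref{lem:odd-domination}. Index cells by pairs $(r,s)$ with $1\le r\le s\le k-1$ and attach to each the product $\alpha_{r+1}\beta_s$. The coefficient in $Q_{\mathrm{int}}$ is $k$ on the diagonal $r=s$ and $2(k-s+r)$ off it. Since the diagonal coefficient $k$ is at most $2(k-s+r)=2k$, we may bound every cell coefficient by $2(k-s+r)$, at least when we need uniformity. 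For the cell $(r,s)$ put
\[
X=x_{r+1}-x_1=\sum_{h=2}^{r+1}\alpha_h,\qquad Y=y_r-y_k=\sum_{h=r}^{k-1}\beta_h,
\]
so that $\sum_{s\ge r}\beta_s/Y=1$ and $\sum_{r\le s}\alpha_{r+1}/X'=1$, where $X'=x_{s+1}-x_1$. These are the normalizations that make the allocation sum to exactly $L_{\mathrm{int}}$. Mirroring the odd case, we allocate
\[
A_{rs}=\frac{\alpha_{r+1}\beta_s}{k}\Bigl(\frac{(k-r)^2}{Y}+\frac{s^2}{X'}\Bigr).
\]
Cauchy's inequality then gives $A_{rs}\ge \alpha_{r+1}\beta_s\, m^2/(kS)$, where $m=k-r+s$ and $S=X'+Y\le x_{s+1}+y_r$. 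Summing over cells gives $\sum A_{rs}= L_{\mathrm{int}}$. Hence $Q_{\mathrm{int}}-L_{\mathrm{int}}\le\sum E_{rs}$, where $E_{rs}$ is the positive part of the cell coefficient minus $m^2/(kS)$, times $\alpha_{r+1}\beta_s$.

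Next we bound each $E_{rs}$ by crossing savings. Since $m\ge k$ and the coefficient is at most $2(k-m)+2s\ldots$, positivity of the coefficient forces $S>1/2$. Here one must be careful: the cell coefficient is $2(2k-m)$ versus $m^2/(kS)$, so positivity gives $S>m^2/(2k(2k-m))\ge 1/2$, using $m\ge k$. It follows that $\tau_{s+1,r}\ge 2S-1>0$. We then aim for the cellwise bound
\[
E_{rs}\le \alpha_{r+1}\beta_s\Bigl(\frac{u}{Y}+\frac{v}{X'}\Bigr)\tau_{s+1,r},
\]
with $u+v$ roughly $2k-m+1$. The choice is $u$ equal to the number of indices $i\ge s+1$, or $i\ge s$, and $v$ equal to the number of $j\le r$. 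Monotonicity gives $u\,\tau_{s+1,r}\le\sum_i\tau_{ir}$ and $v\,\tau_{s+1,r}\le\sum_j\tau_{s+1,j}$. The inequality itself reduces, exactly as in the odd case, to a one-variable check of the form $2S-(u+v)+(4S-2)\sqrt{uv}+m^2/k\ge\ldots$ after multiplying by $S$, which follows from $m\ge k$.

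Summing with the normalizations above, the right-hand sides telescope into $\sum_r \beta\text{-weights}\cdot\sum_i\tau_{ir}$ and $\sum_s \alpha\text{-weights}\cdot\sum_j\tau_{s+1,j}$. We must then compare these with \eqref{eq:even-epcd-saving}. There the probabilities are averages: $p_i^-\ge\frac12(\beta_{i-1}+\beta_i)$ and $q_j\ge\frac12(\alpha_j+\alpha_{j+1})$. This is the main obstacle: the crude charge uses $\beta_s$ and $\alpha_{r+1}$ with full weight, whereas EPCD supplies only half of each from a given report. Our remedy is to split each charge between the two neighbouring reports. For instance, we charge $\alpha_{r+1}$ half to $q_r$ and half to $q_{r+1}$, using $\tau_{i,r}\ge\tau_{i,r+1}$ (since $y$ is decreasing) to transfer the sum to the neighbour, and symmetrically for $\beta_s$ via $p_s^-,p_{s+1}^-$ and monotonicity of $x$. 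If the half-weights lose a factor, we would instead exploit the slack from the diagonal coefficient $k<2k$ together with the extra $+1$ in $u+v$. That is, we tighten the cellwise inequality, taking $u,v$ as counts including the shifted index, so that each $\tau$-sum carries enough multiplicity to absorb the averaging.
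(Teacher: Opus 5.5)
\textbf{There is a genuine gap at the charging step.} Your reduction to $Q_{\mathrm{int}}-L_{\mathrm{int}}\le C'_{\mathrm{EPCD}}-C_{\mathrm{EPCD}}$ is fine, and so is the single-cell budget with the exact normalizations $Y=y_r-y_k$, $X'=x_{s+1}-x_1$. The cellwise bound also holds if you take $u=k-s$, $v=r$; then $u+v=2k-m$, which suffices. The argument fails when you pay for the errors, and not only for the reason you anticipate.

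Summing your cellwise bounds gives $\sum_r\alpha_{r+1}H_r+\sum_s\beta_sR_{s+1}$, where $H_j=\sum_i\tau_{ij}$ and $R_i=\sum_j\tau_{ij}$. In EPCD's saving, however, the increment $\alpha_{r+1}$ enters only as $\frac12\alpha_{r+1}(H_r+H_{r+1})$, and $\beta_s$ only as $\frac12\beta_s(R_s+R_{s+1})$. Since $y$ is nonincreasing and $x$ is nondecreasing, $H_{r+1}\le H_r$ and $R_s\le R_{s+1}$. So you have charged the \emph{larger} neighbour. Your proposed transfer via $\tau_{i,r}\ge\tau_{i,r+1}$ runs the wrong way: it bounds $H_{r+1}$ by $H_r$, not conversely.

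The fallback fails too. With reference pair $(s+1,r)$ the admissible counts are exactly $k-s$ and $r$, so there is no extra $+1$; counting $i=s$ is invalid because $\tau_{sr}\le\tau_{s+1,r}$.

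In fact your error terms themselves are too large, so no rerouting at the last step can rescue the argument. Take $k=200$, $\alpha_2=\frac1{10}$, $\beta_1=\frac1{25}$, $\beta_2=\frac25$, and all other increments zero. The only nonzero crossing savings are $\tau_{i1}=0.08$ for $i\ge 2$. Then $C'_{\mathrm{EPCD}}-C_{\mathrm{EPCD}}=q_1H_1+(p_2^-+p_3^-)\cdot 0.08\approx 0.830$. Your cell $(1,2)$ has $m=201$ and $S=0.54$, so $E_{12}=\bigl(398-\frac{201^2}{200\cdot 0.54}\bigr)\cdot 0.04\approx 0.957$. This happens even if you keep the correct diagonal coefficient $k$, for which $E_{11}=0$. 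Hence $\sum E_{rs}\le C'_{\mathrm{EPCD}}-C_{\mathrm{EPCD}}$ is false.

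\textbf{The missing idea} is the paper's splitting of each off-diagonal cell $(r,s)$, $r<s$, into two copies. Each copy has coefficient $k-s+r$ and half the budget (factor $\frac1{2k}$ instead of $\frac1k$), and the two copies are referenced at \emph{different} crossing pairs:
\begin{itemize}
\item a northeast copy with $X=x_{s+1}$, $Y=y_r$, charged through $\tau_{s+1,r}$ to $H_r$ and $R_{s+1}$;
\item a southwest copy with $X=x_s$, $Y=y_{r+1}$, charged through $\tau_{s,r+1}$ to $H_{r+1}$ and $R_s$.
\end{itemize}
The southwest copy's smaller sum $x_s+y_{r+1}$ earns it a larger allocation $m^2/(2kS)$. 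In the example above its $S$ equals $\frac12$ and its error vanishes. Whatever error remains lands on exactly the ``other half'' $\frac12\alpha_{r+1}H_{r+1}$, $\frac12\beta_sR_s$ that EPCD's averaged probabilities supply.

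Diagonal cells keep their coefficient $k$ and carry only a northeast copy. This is also why you should not inflate them to $2k$. A southwest copy of $(r,r)$ would break the normalizations $\sum_{s>r}\beta_s\le y_{r+1}$ and $\sum_{r<s}\alpha_{r+1}\le x_s$ that bound the southwest budget by half of $L_{\mathrm{int}}$.
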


\begin{proof}
    By Lemma~\ref{lem:even-boundary-reduction}, it is enough to prove
    \[
        Q_{\mathrm{int}}-L_{\mathrm{int}}
        \le
        C'_{\mathrm{EPCD}}-C_{\mathrm{EPCD}}.
    \]
    Recall that
      \[
        Q_{\mathrm{int}}
        =
        \sum_{1\le r\le s\le k-1}(k-s+r)\alpha_{r+1}\beta_s
        +
        \sum_{1\le r<s\le k-1}(k-s+r)\alpha_{r+1}\beta_s.
    \]
    and
    \[
        L_{\mathrm{int}}
        =
        \sum_{r=1}^{k-1}\frac{(k-r)^2}{k}\alpha_{r+1}
        +
        \sum_{s=1}^{k-1}\frac{s^2}{k}\beta_s .
    \]

    We view each product \(\alpha_{r+1}\beta_s\), with
    \(1\le r\le s\le k-1\), as a cell. A diagonal cell \(r=s\) appears once in
    \(Q_{\mathrm{int}}\), with coefficient \(k\). An off-diagonal cell \(r<s\)
    appears twice, each time with coefficient \(k-s+r\). We call these two
    copies the \emph{northeast} copy and the \emph{southwest} copy, respectively.

    The proof assigns to each copy a portion of the linear budget $L_{\mathrm{int}}$. If this assigned portion is not
enough to cover the copy’s contribution, the remaining error is paid for by crossing savings. We will
show that the total assigned linear budget is at most $L_{\mathrm{int}}$, and that the total remaining error is at most $C'_{\mathrm{EPCD}}-C_{\mathrm{EPCD}}$. 

    For a northeast copy of cell \((r,s)\), where \(1\le r\le s\le k-1\), set
    \[
        X=x_{s+1},\qquad Y=y_r,\qquad Z=\tau_{s+1,r}.
    \]
    For a southwest copy of cell \((r,s)\), where \(1\le r<s\le k-1\), set
    \[
        X=x_s,\qquad Y=y_{r+1},\qquad Z=\tau_{s,r+1}.
    \]
    The following argument applies to both types of copies. Put
    \[
        S=X+Y,\qquad m=k-r+s.
    \]
    Then \(k-s+r=2k-m\) and \(m\ge k\). Cells with
    \(\alpha_{r+1}\beta_s=0\) have zero contribution, so we ignore them. For
    the remaining cells, the associated \(X\) and \(Y\) are positive.

    By Cauchy's inequality,
    \[
        \frac{(k-r)^2}{Y}+\frac{s^2}{X}
        \ge
        \frac{(k-r+s)^2}{X+Y}
        =
        \frac{m^2}{S}.
    \]
    Assign to this copy the linear budget
    \[
        \mathcal A
        =
        \frac{\alpha_{r+1}\beta_s}{2k}
        \left(
            \frac{(k-r)^2}{Y}
            +
            \frac{s^2}{X}
        \right).
    \]
    Then
    \[
        \mathcal A
        \ge
        \frac{m^2}{2kS}\alpha_{r+1}\beta_s.
    \]
    Since this copy contributes \((2k-m)\alpha_{r+1}\beta_s\) to $Q_{\mathrm{int}}$, the possible
    uncovered part is at most
    \begin{equation}\label{eq:error}
        E
        =
        \left(
            (2k-m)-\frac{m^2}{2kS}
        \right)_+
        \alpha_{r+1}\beta_s.
    \end{equation}
    Hence every copy satisfies
    \[
        (k-s+r)\alpha_{r+1}\beta_s
        \le
        \mathcal A+E.
    \]

    We now bound the error \(E\) by the local crossing saving \(Z=(2S-1)_+\). We claim
    that
    \[
        E
        \le
        \frac12\alpha_{r+1}\beta_s
        \left(
            \frac{k-s}{Y}
            +
            \frac{r}{X}
        \right)
        (2S-1)_+ .
    \]
    If the coefficient defining \(E\) is non-positive, this is immediate.
    Otherwise,
    \[
        (2k-m)-\frac{m^2}{2kS}>0.
    \]
    Since \(m\ge k\), this implies \(S>1/2\), and thus
    \((2S-1)_+=2S-1\).

    Let \(u=k-s\) and \(v=r\). Then \(u+v=2k-m\). Applying Cauchy's inequality
    again gives
    \[
        \frac{u}{Y}+\frac{v}{X}
        \ge
        \frac{(\sqrt u+\sqrt v)^2}{X+Y}
        =
        \frac{u+v+2\sqrt{uv}}{S}.
    \]
    Therefore it is enough to prove
    \[
        (2k-m)-\frac{m^2}{2kS}
        \le
        \frac12\cdot
        \frac{u+v+2\sqrt{uv}}{S}
        (2S-1).
    \]
    Multiplying by \(2S\), this is equivalent to
    \[
        (2S-1)(u+v+2\sqrt{uv})
        \ge
        2S(u+v)-\frac{m^2}{k}.
    \]
    The left-hand side minus the right-hand side is
    \[
        -(u+v)+(4S-2)\sqrt{uv}+\frac{m^2}{k}.
    \]
    This is non-negative because \(S>1/2\) and
    \(
        \frac{m^2}{k}\ge 2k-m=u+v
    \)
    (where the inequality follows from \(m\ge k\)). This proves the claimed error
    bound.

    We next sum the allocated linear budgets. For northeast copies,
    \(X=x_{s+1}\) and \(Y=y_r\). The part involving \((k-r)^2\) is at most
    \[
    \begin{aligned}
        \sum_{r=1}^{k-1}\sum_{s=r}^{k-1}
        \frac{\alpha_{r+1}\beta_s}{2k}
        \frac{(k-r)^2}{y_r}
        &\le
        \sum_{r=1}^{k-1}
        \frac{(k-r)^2}{2k}\alpha_{r+1},
    \end{aligned}
    \]
    because \(\sum_{s=r}^{k-1}\beta_s\le y_r\). Similarly, the part involving
    \(s^2\) is at most
    \[
        \sum_{s=1}^{k-1}\frac{s^2}{2k}\beta_s,
    \]
    because \(\sum_{r=1}^s\alpha_{r+1}\le x_{s+1}\). Hence all northeast copies
    use at most one half of \(L_{\mathrm{int}}\).

    The southwest copies are analogous. For them \(X=x_s\) and \(Y=y_{r+1}\).
    Using
    \[
        \sum_{s=r+1}^{k-1}\beta_s\le y_{r+1},
        \qquad
        \sum_{r=1}^{s-1}\alpha_{r+1}\le x_s,
    \]
    their total allocation is also at most one half of \(L_{\mathrm{int}}\).
    Therefore all copies together use at most \(L_{\mathrm{int}}\), and so
    \[
        Q_{\mathrm{int}}-L_{\mathrm{int}}
        \le
        \sum_{1\le r\le s\le k-1}E^+_{rs}
        +
        \sum_{1\le r<s\le k-1}E^-_{rs},
    \]
    where \(E^+_{rs}\) and \(E^-_{rs}\) denote the northeast and southwest
    errors as in \eqref{eq:error}.

    We now charge these errors to the cut saving of \(\mathrm{EPCD}\). For a
    northeast copy, \(Z^+_{rs}=\tau_{s+1,r}\), and the error bound gives
    \[
        E^+_{rs}
        \le
        \frac12\alpha_{r+1}\beta_s
        \left(
            \frac{k-s}{y_r}
            +
            \frac{r}{x_{s+1}}
        \right)
        Z^+_{rs}.
    \]
    By monotonicity of \(\tau_{ij}\),
    \[
        (k-s)Z^+_{rs}\le \sum_{i=1}^k\tau_{i,r},
        \qquad
        rZ^+_{rs}\le \sum_{j=1}^k\tau_{s+1,j}.
    \]
    Summing over all northeast copies gives
    \[
        \sum_{1\le r\le s\le k-1}E^+_{rs}
        \le
        \sum_{r=1}^{k-1}\frac{\alpha_{r+1}}{2}
        \sum_{i=1}^k\tau_{i,r}
        +
        \sum_{s=1}^{k-1}\frac{\beta_s}{2}
        \sum_{j=1}^k\tau_{s+1,j}.
    \]

    Similarly, for a southwest copy, \(Z^-_{rs}=\tau_{s,r+1}\), and
    \[
        E^-_{rs}
        \le
        \frac12\alpha_{r+1}\beta_s
        \left(
            \frac{k-s}{y_{r+1}}
            +
            \frac{r}{x_s}
        \right)
        Z^-_{rs}.
    \]
    By monotonicity,
    \[
        (k-s)Z^-_{rs}\le \sum_{i=1}^k\tau_{i,r+1},
        \qquad
        rZ^-_{rs}\le \sum_{j=1}^k\tau_{s,j}.
    \]
    Summing over all southwest copies gives
    \[
        \sum_{1\le r<s\le k-1}E^-_{rs}
        \le
        \sum_{r=1}^{k-1}\frac{\alpha_{r+1}}{2}
        \sum_{i=1}^k\tau_{i,r+1}
        +
        \sum_{s=1}^{k-1}\frac{\beta_s}{2}
        \sum_{j=1}^k\tau_{s,j}.
    \]

    Combining the two estimates,
    \begin{align}
        Q_{\mathrm{int}}-L_{\mathrm{int}}
        \le{}&
        \sum_{r=1}^{k-1}\frac{\alpha_{r+1}}{2}
        \left(
            \sum_{i=1}^k\tau_{i,r}
            +
            \sum_{i=1}^k\tau_{i,r+1}
        \right) +
        \sum_{s=1}^{k-1}\frac{\beta_s}{2}
        \left(
            \sum_{j=1}^k\tau_{s,j}
            +
            \sum_{j=1}^k\tau_{s+1,j}
        \right).
        \label{eq:upper-qint-lint}
    \end{align}

    Finally, compare this with the cut saving of \(\mathrm{EPCD}\). Let
    \[
        R_i=\sum_{j=1}^k\tau_{ij},
        \qquad
        H_j=\sum_{i=1}^k\tau_{ij}.
    \]
  Thus \(R_i\) is the total crossing saving when the mechanism chooses the left
    report \(-x_i\), and \(H_j\) is the total crossing saving when it chooses the
    right report \(y_j\).   By \eqref{eq:even-epcd-saving},
    \[
        C'_{\mathrm{EPCD}}-C_{\mathrm{EPCD}}
        =
        \sum_{i=1}^k p_i^-R_i+\sum_{j=1}^k q_jH_j.
    \]
    From the formulas for the probabilities \(q_j\), each increment
    \(\alpha_{r+1}\), for \(r=1,\ldots,k-1\), appears with coefficient \(1/2\)
    in both \(q_r\) and \(q_{r+1}\). Hence
    \[
        \sum_{j=1}^k q_jH_j
        \ge
        \sum_{r=1}^{k-1}
        \frac{\alpha_{r+1}}{2}(H_r+H_{r+1}).
    \]
    Similarly, each increment \(\beta_s\), for \(s=1,\ldots,k-1\), appears with
    coefficient \(1/2\) in both \(p_s^-\) and \(p_{s+1}^-\). Hence
    \[
        \sum_{i=1}^k p_i^-R_i
        \ge
        \sum_{s=1}^{k-1}
        \frac{\beta_s}{2}(R_s+R_{s+1}).
    \]
    These two inequalities show that
    \(C'_{\mathrm{EPCD}}-C_{\mathrm{EPCD}}\) is at least the right-hand side of
    \eqref{eq:upper-qint-lint}. Therefore
    \[
        Q_{\mathrm{int}}-L_{\mathrm{int}}
        \le
        C'_{\mathrm{EPCD}}-C_{\mathrm{EPCD}}.
    \]
    Since \(F_{\mathrm{even}}\le Q_{\mathrm{int}}-L_{\mathrm{int}}\), the lemma
    follows.
\end{proof}

Now the $\frac32$-approximation of our mechanism in the even-agent case follows.

\begin{proof}[Proof of Theorem~\ref{thm:main-upper} in the even case]
    Consider an arbitrary normalized even-agent profile $\mathbf b$ with positive $\mathrm{OPT}$.
    By definition of \(F_{\mathrm{even}}\),
    \(C'_{\mathrm{RD}}+C'_{\mathrm{EPCD}}=3\mathrm{OPT}+F_{\mathrm{even}}\). 
    Hence
    \begin{align*}
        C_{\mathrm{RD}}+C_{\mathrm{EPCD}}-3\mathrm{OPT}
        &=
        F_{\mathrm{even}}
        -
        (C'_{\mathrm{RD}}-C_{\mathrm{RD}})
        -
        (C'_{\mathrm{EPCD}}-C_{\mathrm{EPCD}})\\
       & \le
        F_{\mathrm{even}}-(C'_{\mathrm{EPCD}}-C_{\mathrm{EPCD}}),
    \end{align*}
    where the inequality is due to the facts that the cut saving of any mechanism is non-negative and
    thus $C'_{\mathrm{RD}}-C_{\mathrm{RD}}\ge 0$.
    Since  Lemma~\ref{lem:even-domination} shows that the last expression is at most
    zero, we obtain \(C_{\mathrm{RD}}+C_{\mathrm{EPCD}}\le 3\mathrm{OPT}.\)
    Our mixed mechanism
    \(\frac12 \mathrm{RD}+\frac12 \mathrm{EPCD}\) has circle-metric social cost of
    \(\frac12(C_{\mathrm{RD}}+C_{\mathrm{EPCD}})\), which is at most \(\frac32\mathrm{OPT}\). This establishes the claimed approximation ratio.
\end{proof}

\section{Lower Bound}\label{sec:lower}

We complement our upper bound with a computer-assisted lower bound for general
randomized strategyproof mechanisms on the circle. The construction uses four
agents and improves the previous lower bound of $1.0456$ due to
Meir~\cite{DBLP:conf/sagt/Meir19}.

\begin{theorem}[Computer-assisted lower bound]\label{thm:lower-four}
For $n=4$, every randomized strategyproof mechanism on the circle has
approximation ratio at least
\[
    \rho_0:=\frac{108818734234255}{99999999999758}
    =1.088187342345\ldots.
\]
\end{theorem}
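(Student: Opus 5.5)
The plan is to reduce the claim to a finite linear program and certify it by LP duality with exact rational arithmetic. We fix a finite family $\mathcal P$ of four-agent profiles on the circle, all supported on a finite grid of points (for instance, multiples of $\frac1{2M}$ for a moderate $M$). The profiles are chosen so that each is one unilateral deviation away from several others. For a mechanism $f$, the unknowns are the probabilities it assigns to grid locations on each profile in $\mathcal P$.

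The first step is a reduction that keeps the LP finite without losing generality.
\begin{itemize}
\item For an arbitrary (possibly non-peaks-only, non-anonymous) strategyproof $f$, we symmetrize: average $f$ over all rotations and reflections of the circle and all permutations of agents. Strategyproofness and the worst-case ratio are preserved or improved under averaging, since the ratio bound is a supremum and each averaged constraint stays linear. So we may assume $f$ is anonymous and neutral.
\item Outcomes may still be arbitrary distributions on $G$. For each profile, each agent's expected cost and the social cost depend on the outcome only through expectations of the piecewise-linear functions $d(x_i,\cdot)$. Hence we can push the mass of any outcome onto the finite set of breakpoints (reports and their antipodes, all on the grid) without changing any agent's cost. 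The crucial property is that a single-agent deviation's cost is also linear in the outcome, so this pushing preserves every constraint.
\end{itemize}
After the reduction, each profile's outcome is a distribution on a finite grid, with anonymity and neutrality identifying symmetric profiles.

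The second step writes the LP. The constraints are:
\begin{itemize}
\item for each ordered pair of profiles $\mathbf x,\mathbf x'\in\mathcal P$ differing only in agent $i$'s report, the two strategyproofness inequalities $\mathbb E_{f(\mathbf x)}[d(x_i,Y)]\le \mathbb E_{f(\mathbf x')}[d(x_i,Y)]$, and symmetrically for $x_i'$;
\item for each $\mathbf x\in\mathcal P$, the ratio constraint $\mathrm{SC}(\mathbf x,f(\mathbf x))\le \rho\,\mathrm{OPT}(\mathbf x)$.
\end{itemize}
We then show this system is infeasible for $\rho<\rho_0$. The natural certificate is a Farkas combination. We take nonnegative multipliers $\lambda$ on the strategyproofness constraints and $\mu_{\mathbf x}$ on the ratio constraints, normalized so that $\sum\mu_{\mathbf x}\mathrm{OPT}(\mathbf x)=1$. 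Summing yields, for every grid point $y$ and profile $\mathbf x$, a pointwise inequality saying the coefficient of $\Pr_{f(\mathbf x)}[y]$ is at least $c$. Because each outcome is a probability vector, this forces $\rho\ge c$. The certificate is therefore just a list of rational multipliers plus finitely many inequalities checkable in exact arithmetic. The odd denominator $99999999999758$ strongly suggests that $\rho_0$ is exactly such a rationalized dual value, rounded down.

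The main obstacle is choosing $\mathcal P$: too few profiles give a weak bound (Meir's $1.0456$ used three agents), and too many make verification opaque. I would first try profiles built from configurations close to the tight examples. These include two pairs at the ends of a diameter, two agents together with two spread agents, and near-antipodal splits where the optimum is degenerate. I would then close the set under single-agent moves between grid points and under the symmetry group, solve the LP numerically, and prune profiles with zero dual weight. Finally, I would round the duals to rationals, verify all inequalities exactly, and report the resulting $\rho_0$.
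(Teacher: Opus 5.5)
Your overall strategy is the paper's: four-agent profiles on a finite antipode-closed grid, symmetrization, a nonnegative combination of social-cost and incentive constraints, reduction of the continuum of outcomes to finitely many breakpoints via piecewise linearity, and exact rational verification. The one structural difference is where you discretize. You push each outcome's mass onto grid points on the primal side. The paper instead leaves $Y_p$ arbitrary and checks the dual inequalities $\overline D_p(y)\ge\gamma_p$ only at the breakpoint set $K_p$ (Lemma~\ref{lem:lower-breakpoints}). The two are equivalent.

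Several steps need repair as written.

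First, averaging $f$ over \emph{all} rotations presupposes that $\theta\mapsto f(R_\theta\mathbf x)$ is measurable. Nothing guarantees this for an arbitrary $f:G^n\to\Delta(G)$, so the averaged mechanism may not be defined. The paper avoids the issue by averaging only over the finite dihedral group $H_m$, with $m=120$ matching the grid $\{j/120\}$. That is all you need, since every profile you use lies on the grid.

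Second, your push targets must include each deviator's true location $u_e$ and its antipode, as in the paper's $T_p$, not just the reports of the destination profile and their antipodes. The function $d(u_e,\cdot)$ has a concave kink at $u_e^*$, so pushing mass off an arc containing $u_e^*$ can lower the deviation cost and break an incentive constraint. Pushing onto the full grid avoids this.

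Third, your Farkas step has the wrong form. LP duality gives a separate constant for each profile: $D_{\mathbf x}(y)\ge\gamma_{\mathbf x}$ for all $y$, with $\gamma_{\mathbf x}$ free (the paper notes these can be negative). The bound is then $\sum_{\mathbf x}\gamma_{\mathbf x}/A$, not a single $c$ bounding every coefficient. If you keep only orbit representatives, you should also impose invariance of $f(\mathbf x)$ under the stabilizer of $\mathbf x$, which is the paper's averaging of $D_p$ to $\overline D_p$. Otherwise the reduced LP certifies less.

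Finally, $\rho_0$ is simply the value of one specific certificate: $489$ representatives on the $1/120$ grid, with $108$ nonzero social-cost weights and $700$ nonzero incentive weights. Your recipe proves \emph{some} bound of this kind. It establishes the stated constant only if the profile and deviation family you select reproduces that certificate.
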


The proof uses a weighted sum of social-cost and incentive constraints on a
finite family of profiles. Symmetry reduces the number of profiles to consider,
and piecewise linearity allows the resulting inequalities to be verified at
finitely many points.

\begin{lemma}[Finite symmetrization]\label{lem:lower-symmetry}
Let $m$ be a positive integer. Using coordinates modulo $1$ on the unit circle,
define
\[
    H_m=\left\{
        h_j^+(x)=x+\frac jm,\quad h_j^-(x)=-x+\frac jm
        : j=0,\ldots,m-1
    \right\}.
\]
Thus $H_m$ consists of $2m$ transformations: rotations by integer multiples of
$\frac1m$ of a full turn, and those rotations combined with the reflection
$x\mapsto -x$. Coordinates differing by an integer denote the same point.
Each transformation preserves distances, and compositions and inverses of
these transformations also belong to $H_m$; this is the finite group of
symmetries used below. A mechanism $f$ is
\emph{$H_m$-equivariant} if, for every $h\in H_m$ and every profile $\mathbf x$,
\[
    f(h\circ\mathbf x)=h\circ f(\mathbf x),
\]
Here $h\circ\mathbf x=(h(x_1),\ldots,h(x_n))$, and the right-hand side is the
distribution obtained by drawing $Y\sim f(\mathbf x)$ and returning $h(Y)$. Thus, transforming all reports by $h$
transforms the output distribution by the same isometry. This is neutrality
restricted to the transformations in $H_m$.
Any randomized strategyproof mechanism can be replaced by an anonymous,
$H_m$-equivariant strategyproof mechanism with no larger approximation ratio.
\end{lemma}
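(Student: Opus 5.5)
The plan is the standard symmetrization (averaging) argument. Start from an arbitrary randomized strategyproof mechanism $f$ with approximation ratio $\rho(f)$. First symmetrize over the finite group $H_m$ by defining
\[
    \bar f(\mathbf x)=\frac{1}{2m}\sum_{h\in H_m} h^{-1}\circ f(h\circ\mathbf x),
\]
the uniform mixture of the distributions $h^{-1}\circ f(h\circ \mathbf x)$. Then symmetrize over agent permutations: for each permutation $\pi$ of $N$, let $f_\pi(\mathbf x)=\bar f(\mathbf x_\pi)$, where $\mathbf x_\pi$ is the profile with reports permuted by $\pi$, and take $g=\frac1{n!}\sum_\pi f_\pi$. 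Because $H_m$ acts coordinatewise and permutations act on indices, the two operations commute. Hence $g$ is anonymous, and it remains $H_m$-equivariant.

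Next, check that each component mechanism is strategyproof and that mixing preserves strategyproofness. For a fixed $h$, the mechanism $\mathbf x\mapsto h^{-1}\circ f(h\circ\mathbf x)$ is strategyproof. To see this, take agent $i$ with true location $x_i$ and misreport $x_i'$. Since $h$ is an isometry,
\[
\mathbb E_{Y\sim f(h\circ\mathbf x)}[d(x_i,h^{-1}Y)]=\mathbb E[d(h(x_i),Y)].
\]
The profile $h\circ(x_i',\mathbf x_{-i})$ is $h\circ\mathbf x$ with agent $i$'s report changed to $h(x_i')$. So strategyproofness of $f$ at the profile $h\circ\mathbf x$, for true location $h(x_i)$, gives the required inequality. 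Permuting agents also preserves strategyproofness, because agent $i$ in $f_\pi$ is just some agent $\pi(i)$ of $\bar f$. Finally, a fixed-probability mixture of strategyproof mechanisms is strategyproof by linearity of expectation, since each component's inequality holds for the same deviation.

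Equivariance is a group reindexing. For $g'\in H_m$,
\[
\bar f(g'\circ\mathbf x)=\frac1{2m}\sum_h h^{-1}f(hg'\mathbf x).
\]
Substituting $h''=hg'$ turns this into $g'\circ\frac1{2m}\sum_{h''}h''^{-1}f(h''\mathbf x)=g'\circ\bar f(\mathbf x)$. This uses that $H_m$ is closed under composition and inverses, as stated in the lemma.

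For the ratio, isometries preserve both social cost and optimum: $\mathrm{SC}(\mathbf x,h^{-1}\circ f(h\mathbf x))=\mathrm{SC}(h\mathbf x,f(h\mathbf x))$ and $\mathrm{OPT}(h\mathbf x)=\mathrm{OPT}(\mathbf x)$. Permutations preserve both quantities as well. Therefore each component has cost at most $\rho(f)\,\mathrm{OPT}(\mathbf x)$ on $\mathbf x$. Since social cost is linear in the output distribution, the average satisfies $\mathrm{SC}(\mathbf x,g(\mathbf x))\le\rho(f)\mathrm{OPT}(\mathbf x)$, so $\rho(g)\le\rho(f)$.

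The only delicate point is measurability and well-definedness of the mixture of distributions, but that is benign because the group and the permutation set are finite.
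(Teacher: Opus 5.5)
Your proof is correct and is essentially the paper's argument. The paper also averages $h^{-1}\circ f(h\circ\mathbf x_\pi)$ over a uniformly random agent permutation and a uniformly random $h\in H_m$, noting that each fixed choice preserves strategyproofness and the ratio and that averaging preserves both; you simply split this averaging into two commuting stages and write out the group-reindexing and isometry computations that the paper leaves implicit.
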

\begin{proof}
Choose an agent permutation and an element of $H_m$ uniformly at random,
apply these transformations to the reported profile, run the original
mechanism, and apply the inverse circle isometry to its output. Each fixed
choice preserves strategyproofness, since it only relabels agents and preserves
distances. It also preserves the approximation guarantee, because both social
cost and its optimum are invariant under these transformations. Averaging
preserves these inequalities and gives the claimed symmetries.
\end{proof}

Two profiles are equivalent for this argument if one can be obtained from the
other by reordering the agents and applying the same transformation in $H_m$
to every report. Anonymity makes the reordering irrelevant, and equivariance
determines how the output distribution changes under the transformation.
We therefore retain one profile from each equivalence class that we use in
the argument. Let $\mathcal P$ denote this finite collection of
\emph{representative profiles}, and let $Y_p\sim f(p)$ be the random output at
$p\in\mathcal P$.

Let $\mathcal E$ index a finite collection of single-agent deviations whose
source and destination profiles are represented in $\mathcal P$. For each
$e\in\mathcal E$, an agent at true location $t_e$ in profile $p_e$ changes its
report. Write $q_e$ for the representative of the resulting profile, and let
$h_e\in H_m$ be the transformation used to map that profile to $q_e$, together
with a reordering of agents. Define $u_e=h_e(t_e)$. Strategyproofness then gives
\begin{equation}\label{eq:lower-incentive}
    \mathbb E[d(t_e,Y_{p_e})]
    \le \mathbb E[d(u_e,Y_{q_e})],
    \qquad e\in\mathcal E.
\end{equation}
The left-hand side is the agent's expected cost when reporting truthfully;
the right-hand side is its expected cost after the deviation. The location
$u_e$ is its true location in the coordinates used for $q_e$, not its new
report. If no rotation or reflection is needed, then $u_e=t_e$.

A \emph{certificate} here is a choice of nonnegative weights for social-cost
and incentive inequalities, together with pointwise lower bounds on their
weighted sum. The following lemma explains how such a certificate yields an
approximation lower bound.

\begin{lemma}[Weighted certificate]\label{lem:lower-certificate}
Assign a nonnegative weight $\alpha_p$ to the social-cost constraint at each
profile $p\in\mathcal P$, and a nonnegative weight $\beta_e$ to each incentive
constraint $e\in\mathcal E$. These weights are coefficients for adding
inequalities; they are not probabilities. Define
\[
    A=\sum_{p\in\mathcal P}\alpha_p\mathrm{OPT}(p)>0,
\]
\[
    D_p(y)=\alpha_p\mathrm{SC}(p,y)
      +\sum_{e:p_e=p}\beta_e d(t_e,y)
      -\sum_{e:q_e=p}\beta_e d(u_e,y).
\]
The quantity $A$ is the weighted sum of optimal social costs. In $D_p$, the
first term comes from the social-cost constraint at $p$. A deviation starting
at $p$ contributes a positive term, while a deviation ending at representative
$p$ contributes a negative term when its right-hand side is moved to the left.

For each profile $p$, let
\[
    H_p=\{h\in H_m: h\circ p\text{ is a reordering of }p\}.
\]
These are the symmetries of that particular profile: after applying $h$, each
location still has the same number of reported agents. For example, a
reflection exchanging two equally populated locations preserves their profile.
Write $|H_p|$ for the number of such transformations, and set
\[
    \overline D_p(y)=\frac1{|H_p|}\sum_{h\in H_p}D_p(h(y)),
    \qquad
    \gamma_p=\min_{y\in G}\overline D_p(y).
\]
Thus $\overline D_p$ averages $D_p$ over the symmetries of $p$, and $\gamma_p$
is the smallest value of this averaged function as the facility moves around
the circle. The value $\gamma_p$ may be negative, since $D_p$ contains signed
terms. Every randomized strategyproof mechanism has approximation ratio at least
\[
    \frac{\sum_{p\in\mathcal P}\gamma_p}{A}.
\]
\end{lemma}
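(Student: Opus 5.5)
By Lemma~\ref{lem:lower-symmetry}, I will reduce to an anonymous, $H_m$-equivariant strategyproof mechanism $f$ whose approximation ratio $\rho$ is no larger than that of the original mechanism. It then suffices to show that $\rho A\ge\sum_p\gamma_p$ for such an $f$. The approach is to add up the valid inequalities with the given nonnegative weights and then regroup the result profile by profile.

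The first step collects the two kinds of valid inequalities. For each $p\in\mathcal P$, the approximation guarantee gives $\mathbb E[\mathrm{SC}(p,Y_p)]\le\rho\,\mathrm{OPT}(p)$. For each $e\in\mathcal E$, the incentive constraint \eqref{eq:lower-incentive} gives $\mathbb E[d(t_e,Y_{p_e})]-\mathbb E[d(u_e,Y_{q_e})]\le 0$. I will multiply these by $\alpha_p\ge0$ and $\beta_e\ge0$ respectively and sum, which yields
\[
\sum_{p}\mathbb E\Bigl[\alpha_p\mathrm{SC}(p,Y_p)+\sum_{e:p_e=p}\beta_e d(t_e,Y_p)-\sum_{e:q_e=p}\beta_e d(u_e,Y_p)\Bigr]\le\rho A .
\]
Each incentive term refers to the output at its source representative or its destination representative, so after grouping by profile the left-hand side is exactly $\sum_p\mathbb E[D_p(Y_p)]$. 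This step relies on the definition of $u_e$: by equivariance, the deviating agent's cost at the transformed profile equals $\mathbb E[d(u_e,Y_{q_e})]$.

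The key step is to show $\mathbb E[D_p(Y_p)]\ge\gamma_p$. A naive bound $\mathbb E[D_p(Y_p)]\ge\min_y D_p(y)$ would be too weak, so I will exploit the symmetry of the profile. Take $h\in H_p$. Then $h\circ p$ is a reordering of $p$, so anonymity and equivariance give $f(p)=f(h\circ p)=h\circ f(p)$. Hence $Y_p$ and $h(Y_p)$ have the same distribution, and $\mathbb E[D_p(Y_p)]=\mathbb E[D_p(h(Y_p))]$ for every $h\in H_p$. Averaging over $H_p$ gives $\mathbb E[D_p(Y_p)]=\mathbb E[\overline D_p(Y_p)]\ge\gamma_p$.

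Summing over $p$ then gives $\sum_p\gamma_p\le\rho A$, and since $A>0$ this is the claimed bound on $\rho$. I expect the main obstacle to be the bookkeeping in the first step: checking that transporting each deviation into the coordinates of $q_e$ via $h_e$ and a reordering preserves the deviator's expected cost. This requires that $h_e$ is an isometry, that the deviator's identity is immaterial by anonymity, and that $f(h_e\circ\cdot)=h_e\circ f(\cdot)$, so that $\mathbb E[d(t_e,Y')]=\mathbb E[d(h_e(t_e),h_e(Y'))]$, where $Y'$ denotes the output at the actual deviated profile. Measurability and existence of the minimum are routine, since $\overline D_p$ is continuous on the compact circle $G$.
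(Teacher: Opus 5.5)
Your proposal is correct and follows the paper's proof step for step. You reduce via Lemma~\ref{lem:lower-symmetry} to an anonymous $H_m$-equivariant mechanism, add the $\alpha_p$-weighted social-cost and $\beta_e$-weighted incentive inequalities and regroup them as $\sum_p\mathbb E[D_p(Y_p)]\le\rho A$, and then use $f(p)=f(h\circ p)=h\circ f(p)$ for $h\in H_p$ to obtain $\mathbb E[D_p(Y_p)]=\mathbb E[\overline D_p(Y_p)]\ge\gamma_p$. The paper does exactly the same; the only implicit point, left equally implicit in the paper, is that one may take $\rho<\infty$, since otherwise the bound holds trivially.
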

\begin{proof}
By Lemma~\ref{lem:lower-symmetry}, it suffices to consider a symmetrized mechanism
with finite approximation ratio $\rho$. At each profile $p$, its approximation
guarantee gives
\[
    \mathbb E[\mathrm{SC}(p,Y_p)]\le\rho\,\mathrm{OPT}(p).
\]
Multiply these inequalities by $\alpha_p$ and the
inequalities~\eqref{eq:lower-incentive} by $\beta_e$, moving each incentive
constraint's right-hand side to the left. Collecting all terms involving the
same random output $Y_p$ gives
\[
    \sum_{p\in\mathcal P}\mathbb E[D_p(Y_p)]\le \rho A.
\]
For $h\in H_p$, the transformed profile is a reordering of $p$. Anonymity and
equivariance therefore imply that $h(Y_p)$ and $Y_p$ have the same
distribution. Averaging their expectations gives
\[
    \mathbb E[D_p(Y_p)]
    =\mathbb E[\overline D_p(Y_p)]\ge\gamma_p.
\]
Summing over $p$ and dividing by $A$ proves the claim.
\end{proof}

\begin{lemma}[Finite verification on the continuous circle]\label{lem:lower-breakpoints}
For a profile $p$, let $T_p$ contain its reported locations and all true
locations $t_e$ and $u_e$ appearing in its function $D_p$. For a point $t$, its
\emph{antipode} is $t^*=t+\frac12$ modulo $1$, the point halfway around the
circle. Define the finite set
\[
    K_p=\{h(t),h(t^*):t\in T_p,\ h\in H_p\}.
\]
Thus $K_p$ is formed by adding the antipodes of the points in $T_p$ and then
applying all symmetries of $p$. Each minimum in
Lemma~\ref{lem:lower-certificate} satisfies
\[
    \gamma_p=\min_{y\in K_p}\overline D_p(y).
\]
\end{lemma}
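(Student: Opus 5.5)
The plan is to show that $\overline D_p$ is piecewise linear on the circle, with all breakpoints contained in $K_p$. A piecewise linear function on a compact circle, linear between consecutive breakpoints, attains its minimum at a breakpoint.

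\textbf{Step 1: each distance term.} For a fixed point $t\in G$, the function $y\mapsto d(t,y)=\min\{|t-y|,1-|t-y|\}$ (with $|t-y|$ read modulo $1$) is piecewise linear along the circle. Its slope is $\pm1$, and the slope changes only at $y=t$ (a local minimum) and $y=t^*$ (a local maximum). Now $D_p$ is a finite signed combination of such terms: $\alpha_p\mathrm{SC}(p,y)=\alpha_p\sum_i d(p_i,y)$, plus terms $\beta_e d(t_e,y)$ and $-\beta_e d(u_e,y)$. Hence $D_p$ is continuous and piecewise linear, with breakpoints contained in $\{t,t^*: t\in T_p\}$.

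\textbf{Step 2: composing with a symmetry.} Each $h\in H_p$ is an isometry of the circle, so $d(t,h(y))=d(h^{-1}(t),y)$. Therefore $y\mapsto D_p(h(y))$ has breakpoints contained in $\{h^{-1}(t),h^{-1}(t)^*: t\in T_p\}$.

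Since $H_p$ is a group (closed under inverses, as the profile-stabilizer subgroup of $H_m$), we have $h^{-1}\in H_p$. Each $h$ is a rotation or a reflection composed with a rotation, and both satisfy $h(t^*)=h(t)^*$. So these breakpoints lie in $K_p$.

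\textbf{Step 3: conclude.} Averaging over $h\in H_p$ preserves continuity and piecewise linearity, so $\overline D_p$ is linear on each arc between consecutive points of the finite set $K_p$. $K_p$ is nonempty, since it contains the reported locations. On each closed arc, the minimum of a linear function is attained at an endpoint. Hence $\gamma_p=\min_{y\in G}\overline D_p(y)=\min_{y\in K_p}\overline D_p(y)$. The reverse inequality is trivial because $K_p\subseteq G$.

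\textbf{Main obstacle.} The only delicate point is the bookkeeping in Step 2. One must verify that $K_p$, defined via $h(t)$ rather than $h^{-1}(t)$, still captures all breakpoints. This needs closure of $H_p$ under inverses, which holds because $H_p$ is the stabilizer of the multiset $p$ in the group $H_m$. It also needs $h$ to commute with the antipode map. That holds for $h_j^\pm$: for the rotation, $h_j^+(t+\tfrac12)=h_j^+(t)+\tfrac12$; for the reflection, $h_j^-(t+\tfrac12)=-t-\tfrac12+\tfrac jm\equiv h_j^-(t)+\tfrac12 \pmod 1$.
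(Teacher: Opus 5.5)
Your proposal is correct and follows essentially the same route as the paper: each term $d(t,\cdot)$ has slope changes only at $t$ and $t^*$, composing with $h\in H_p$ moves these breakpoints into $K_p$ because $H_p$ is closed under inverses, and a function that is linear between consecutive points of $K_p$ attains its minimum at one of them. Your check that $h$ commutes with the antipode map is correct but not actually needed, since the breakpoints of $y\mapsto d(t,h(y))$ are directly $h^{-1}(t)$ and $h^{-1}(t^*)$, and both already lie in $K_p$.
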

\begin{proof}
As $y$ travels around the circle, $d(t,y)$ changes linearly between $t$ and
$t^*$; only at these two points can its slope change. Such a point is called
a \emph{breakpoint}. A weighted sum of distance functions is therefore
linear on every arc containing none of their breakpoints. Averaging over
$H_p$ also transforms these breakpoints. Since $H_p$ contains the inverse of
each of its transformations, all resulting breakpoints belong to $K_p$.
The function $\overline D_p$ is consequently linear between consecutive points
of $K_p$, so its minimum on each such arc is attained at an endpoint.
\end{proof}

\begin{proof}[Proof sketch of Theorem~\ref{thm:lower-four}]
Use the grid
\[
    V=\left\{\frac{j}{120}:j=0,\ldots,119\right\}\subseteq G,
\]
with coordinates interpreted modulo $1$. Here $V^4$ denotes the profiles of
four agents, each reporting one of the 120 grid locations. Select profiles
and single-agent deviations between them, and retain representatives under
agent permutations and $H_{120}$ as described above.

The weights in Lemma~\ref{lem:lower-certificate} are found by linear
programming. For each $p$, the condition
$\overline D_p(y)\ge\gamma_p$ is imposed at every $y\in K_p$. At a fixed
point $y$, this is a linear inequality in the weights and $\gamma_p$, because
all distances are fixed numbers. With the normalization $A=1$, maximizing
$\sum_p\gamma_p$ searches for the strongest certificate supported by the
selected constraints.

The resulting certificate uses 489 representative profiles, with 108 nonzero
social-cost weights and 700 nonzero incentive weights. The numerical weights
are replaced by nonnegative rational weights, and the minima in
Lemma~\ref{lem:lower-breakpoints} are recomputed using exact rational
arithmetic. Dividing every weight by $A$ scales each
$\gamma_p$ by the same factor and leaves the certified ratio unchanged.
The normalized certificate satisfies
\[
    A=1,
    \qquad
    \sum_{p\in\mathcal P}\gamma_p
    =\frac{108818734234255}{99999999999758}.
\]
Exact rational verification checks these identities, the nonnegativity of the
weights, and each selected deviation. The pointwise minima are evaluated at
the breakpoints specified in Lemma~\ref{lem:lower-breakpoints}.
Applying Lemma~\ref{lem:lower-certificate} gives the theorem.
\end{proof}

The same bound holds for $n=4k$, where $k\ge1$ is an integer. To see this,
replace each report of a four-agent profile by $k$ identical reports and apply
an arbitrary $4k$-agent mechanism. If one of the four agents changes its report
from $t$ to $s$, move its $k$ copies one at a time. At each step, ordinary
strategyproofness compares expected distance from the same true location $t$;
adding these inequalities shows that the induced four-agent mechanism is
strategyproof. Social cost and its optimum both scale by $k$, so its
approximation ratio is no larger than that of the original mechanism.

\section{Conclusion}\label{sec:conclusion}

We study randomized strategyproof mechanisms for facility location on a circle under the utilitarian social cost objective. 
We show that a simple {parity-dependent MRP} mechanism built as a mixture of two randomized mechanisms achieves an approximation ratio of \(\frac32\) for every \(n\ge 3\).
For odd agent number \(n\), the mechanism combines the \(\mathrm{RD}\) and \(\mathrm{PCD}\) mechanisms with equal probability; for even \(n\), it combines the \(\mathrm{RD}\) mechanism with \(\mathrm{EPCD}\), a random-deletion extension of \(\mathrm{PCD}\).
The proof refines the circle-cutting approach by explicitly accounting for how much the cut metric overestimates the true circle distance for pairs lying on opposite sides of the cut.
We also show that the \(\frac32\) guarantee is asymptotically tight for MRP as the number of agents grows, and establish a computer-assisted lower bound of \(1.088187\ldots\) on the approximation ratio of any randomized strategyproof mechanism.

Several questions remain open. The most important one is to close the gap between the \(1.088187\ldots\) lower bound and the \(\frac{3}{2}\) upper bound. 
The lower-bound construction in this paper uses four agents, and larger-agent instances may lead to stronger impossibility results. 
Another natural direction is to consider {other classes of} randomized mechanisms.
{Within fixed-weight combinations of \(\mathrm{RD}\) and \(\mathrm{PCD}\), the equal-probability choice is forced by two complementary examples shown in the analysis of Proposition~\ref{prop:tight-examples}, which suggests that it is the best balance within this family.
Thus, improving the upper bound below \(\frac32\) likely requires different types of mechanisms, perhaps one using nonlinear gap-dependent probabilities in the spirit of quadratic circle-distance mechanisms.
}

\bibliographystyle{plain}
\bibliography{myreferences}

\end{document}